\newcommand{\R}{{\mathbb  R}}
\numberwithin{equation}{section}
\newtheorem{thm}{\bf Theorem}[section]
\newtheorem{lem}[thm]{\bf Lemma}
\newtheorem{prop}[thm]{\bf Proposition}
\theoremstyle{remark}
\newtheorem{rem}{\bf Remark}[section]
\title{A note on stability of spacecrafts and underwater vehicles}
\author{Dan Com\u anescu\footnote{Corresponding author - E-mail: comanescu@math.uvt.ro; Phone: +40 256 592281; Fax: +40 256 592316}\\
{\small Department of Mathematics, West University of Timi\c soara}\\
{\small Bd. V. P\^ arvan, No 4, 300223 Timi\c soara, Rom\^ ania}\\
{\small E-mail addresses: comanescu@math.uvt.ro}}
\date{}
\begin{document}

\maketitle

\begin{abstract}
A Hamilton-Poisson system is an approach for the motion of a spacecraft around an asteroid or for the motion of an underwater vehicle.  We construct a coordinate chart on the symplectic leaf which contains a specific generic equilibrium point and we establish stability conditions for this equilibrium point.
\end{abstract}

\noindent {\bf MSC 2010}: 34D20, 37B25, 70E50, 70H14.

\noindent \textbf{Keywords:} rotations, rigid body, stability.

\section{Introduction}

The problem of stability of spacecrafts and underwater vehicles attracted numerous resources and led to the emergence of a significant number of theoretical studies.

In this paper we consider an idealized dynamics of spacecrafts and underwater vehicles which is a Hamilton-Poisson system. For a spacecraft we consider the dynamics presented in \cite{wang-xu} and for an underwater vehicle we work with the dynamics considered in \cite{leonard-1996}. Specific Casimir functions of these situations allow description of the symplectic leaves containing some equilibrium points by using the rotation matrices. We construct a coordinate chart on a symplectic leaf specified above by means of the double covering map between the 3-sphere $S^3$ and the special orthogonal group $SO(3)$. The expression of Hamilton function by using this coordinate chart lead us to some stability results for particular states of spacecrafts and underwater vehicles.  To decide the stability we use the algebraic method presented in \cite{comanescu}. Alternative methods are Arnold method (see \cite{arnold}) or equivalent methods as Casimir method and Ortega-Ratiu method (see \cite{birtea-puta}). 

In the second section we present stability results of some generic equilibrium points of a spacecraft moving around an asteroid. We find the sufficient conditions for stability from the paper \cite{wang-xu}. Our method reduces to the study of the eigenvalues of a $6\times 6$ matrix while paper \cite{wang-xu} work with a $12\times 12$ matrix.

In Section 3 we study the stability of some generic equilibrium points of an underwater vehicle.
We consider a more general situation than studied in papers \cite{leonard-automatica}, \cite{leonard-1996}, \cite{leonard-1996-1}, \cite{leonard-marsden}; we suppose that the third axis is a principal axis of inertia for the vehicle but the first and second axes of the vehicle-fixed frame may not be principal axes of inertia for the vehicle.
We prove that the conditions for the stability of an equilibrium point does not depend on the position of the first and second axes of inertia of the vehicle in the perpendicular plane on the third axis of the vehicle-fixed frame.

\section{Stability of a spacecraft moving around an asteroid}

We consider a rigid spacecraft moving on a stationary orbit around a rigid asteroid. The fixed-body frame of the asteroid $(O,{\bf u},{\bf v},{\bf w})$ has the origin in the mass center and the axes are principal axes of inertia of the asteroid. According to the hypotheses made in the paper \cite{wang-xu}, we assume  that the mass center of the asteroid is stationary in an inertial frame, the asteroid has an uniform rotation around its maximum-moment principal axis, the spacecraft is on a stationary orbit, and the orbital motion is not affected by the attitude motion.

The fixed-body frame of the spacecraft  $(C,{\bf i},{\bf j},{\bf k})$ has the origin $C$ in the mass center and the axes are principal axes of inertia of the spacecraft. "A stationary orbit in the inertial frame corresponds to an equilibrium in the fixed-body frame of the asteroid; there are two kinds of stationary orbits: those that lie on the intermediate moment principal axis of the asteroid, and those that lie on the minimum-moment principal axis of the asteroid" (see \cite{wang-xu}).

The attitude of the spacecraft is desribed by the vectors $\boldsymbol{\alpha}, \boldsymbol{\beta}, \boldsymbol{\gamma}$, where $\boldsymbol{\gamma}$ is the versor with the origin in the mass center of the spacecraft towards the mass center of the asteroid, $\boldsymbol{\beta}$ is the versor in the opposite direction of the orbital angular momentul, and $\boldsymbol{\alpha}=\boldsymbol{\beta}\times \boldsymbol{\gamma}$. We denote by $\boldsymbol{\Pi}$ the angular momentum of the spacecraft with respect to the inertial frame and ${\bf z}=(\boldsymbol{\Pi},\boldsymbol{\alpha},\boldsymbol{\beta},\boldsymbol{\gamma})$.

\begin{figure}[h!]
  \caption{Spacecraft around an asteroid.}
  \centering
    \includegraphics[width=1.0\textwidth]{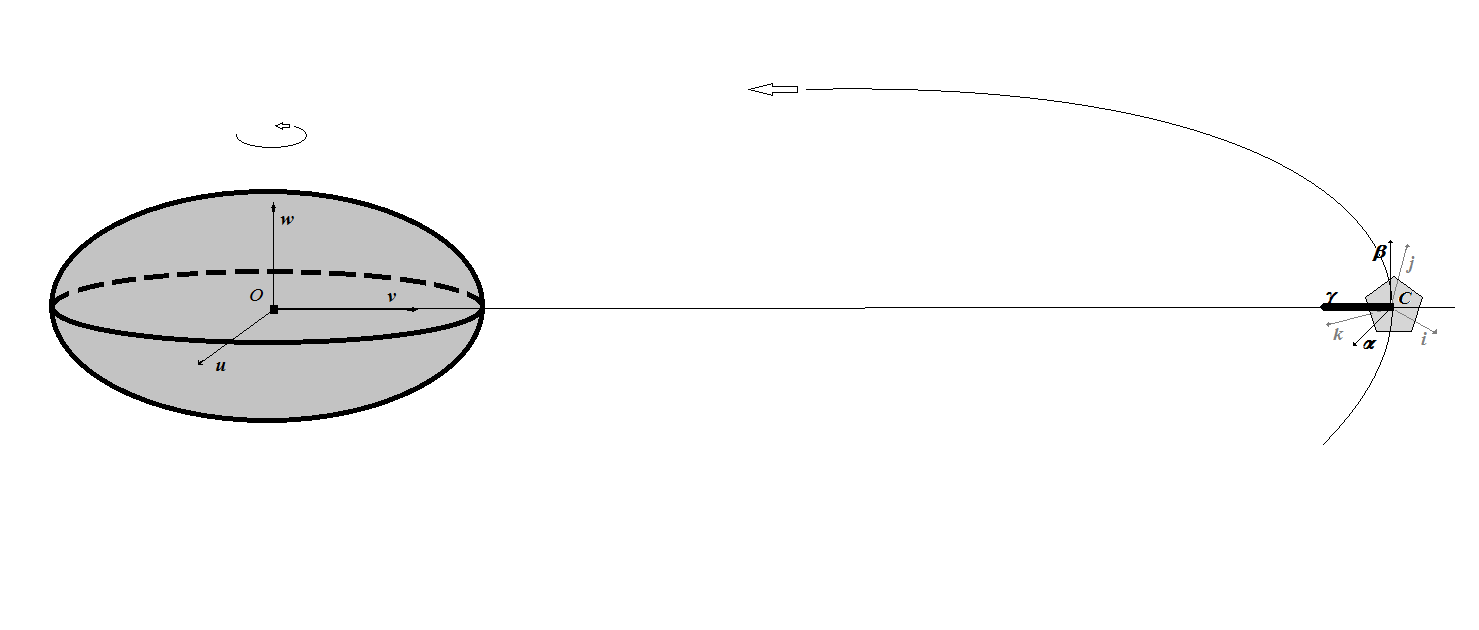}
\end{figure}

According to the paper \cite{wang-xu}, the system of motion can be written as
\begin{equation}\label{spacecraft-asteroid-eq}
\dot{\bf z}=\Lambda({\bf z})\nabla H({\bf z});
\end{equation}
The Hamiltonian function is given by
\begin{equation}
H({\bf z})=\frac{1}{2}<\boldsymbol{\Pi},\mathbb{I}^{-1}\boldsymbol{\Pi}>+\omega_T<\boldsymbol{\Pi},\boldsymbol{\beta}>+V(\boldsymbol{\alpha},\boldsymbol{\beta},\boldsymbol{\gamma}),\footnote{$<\cdot,\cdot>$ is the scalar product on $\R^3$.}
\end{equation}
where $\omega_T$ is the angular velocity of the uniform rotation of the asteroid, $\mathbb{I}$ is the inertia tensor of the spacecraft and its matrix in the fixed-body frame is $\text{diag}(I_1,I_2,I_3)$. The function $V$ represents the perturbation due to the gravity gradient torque and it has the expression:
$$V(\boldsymbol{\alpha},\boldsymbol{\beta},\boldsymbol{\gamma})=k_1<\boldsymbol{\alpha},\mathbb{I}\boldsymbol{\alpha}>+k_2<\boldsymbol{\beta},\mathbb{I}\boldsymbol{\beta}>+
k_3<\boldsymbol{\gamma},\mathbb{I}\boldsymbol{\gamma}>,$$
where $k_1,k_2$ and $k_3$ are constants.
The Poisson tensor  is given by
$$ \Lambda({\bf z})=\left(%
\begin{array}{cccc}
\widehat{\boldsymbol{\Pi}} & \widehat{\boldsymbol{\alpha}} & \widehat{\boldsymbol{\beta}} & \widehat{\boldsymbol{\gamma}} \\
 \widehat{\boldsymbol{\alpha}} & O_3 & O_3 & O_3 \\
\widehat{\boldsymbol{\beta}} & O_3 & O_3 & O_3 \\
\widehat{\boldsymbol{\gamma}} & O_3 & O_3 & O_3
\end{array}%
\right).
\footnote{$O_3$ is the null matrix, and for a vector ${\bf v}=(a,b,c)$ we denote by $\widehat{\bf v}:=\left(%
\begin{array}{ccc}
0 & -c & b \\
c & 0 & -a \\
-b & a & 0
\end{array}%
\right).$}
$$
The Casimir functions are $C_{11}({\bf z})=||\boldsymbol{\alpha}||^2$, $C_{12}({\bf z})=<\boldsymbol{\alpha},\boldsymbol{\beta}>$, $C_{13}({\bf z})=<\boldsymbol{\alpha},\boldsymbol{\gamma}>$, $C_{22}({\bf z})=||\boldsymbol{\beta}||^2$ where $1\leq i\leq j\leq 3$ and we denote by ${\bf C}:\R^{12}\rightarrow \R^6$ the vectorial Casimir function.
We have seven conserved quantities $H,C_{11},C_{12},C_{13},C_{22},C_{23},C_{33}$ of the dynamics generated by \eqref{spacecraft-asteroid-eq}. 

We observe that we have the following equilibrium point ${\bf z}^e$:
\begin{equation}\label{equilibrium-spacecraft}
{\boldsymbol{\Pi}}^e=-\omega_TI_2{\bf j},\,\,{\boldsymbol{\alpha}}^e={\bf i},\,\,{\boldsymbol{\beta}}^e={\bf j},\,\,{\boldsymbol{\gamma}}^e={\bf k}.
\end{equation}
According to \cite{birtea-comanescu}, it is a generic equilibrium point because it is a regular point of ${\bf C}$; i.e. $\text{rank} \,\nabla {\bf C}({\bf z}^e)=6$.

For our study we construct a coordinate chart on the symplectic leaf  ${\bf C}^{-1}({\bf C}({\bf z}^e))$ around the equilibrium point ${\bf z}^e$.
We consider the open subset:
$${\bf C}^+_{{\bf z}^e}=\{{\bf z}=(\boldsymbol{\Pi},\boldsymbol{\alpha},\boldsymbol{\beta},\boldsymbol{\gamma})\in {\bf C}^{-1}({\bf C}({\bf z}^e))\,|\,\text{sgn}<\boldsymbol{\alpha},\boldsymbol{\beta}\times\boldsymbol{\gamma}>=\text{sgn}<\boldsymbol{\alpha}^e,\boldsymbol{\beta}^e\times\boldsymbol{\gamma}^e>\}\footnote{$\text{sgn}$ is the signum function}$$
and we observe that ${\bf z}^e\in {\bf C}^+_{{\bf z}^e}$.

\begin{lem}
The function $\mathcal{F}_{{\bf z}_e}:SE(3)\rightarrow {\bf C}^+_{{\bf z}^e}$ given by
$\mathcal{F}_{{\bf z}^e}(\boldsymbol{\Pi},R):=(\boldsymbol{\Pi}, R\boldsymbol{\alpha}^e, R\boldsymbol{\beta}^e, R\boldsymbol{\gamma}^e)$
is a homeomorphism and we have $\mathcal{F}_{{\bf z}^e}(\boldsymbol{\Pi}^e,I_3)={\bf z}^e$.
\footnote{We use the special orthogonal group and the special Euclidean group,
$$SO(3)=\{X\in \mathcal{M}_3(\R)\,|\,XX^T=X^TX=I_3,\,\det(X)=1\},\,\,
SE(3):=\R^3\rtimes SO(3).$$
$\mathcal{M}_3(\R)$ is a normed space by using the Frobenius norm
$||X||_F=\sqrt{\text{Trace}(X^TX)}.$}

\end{lem}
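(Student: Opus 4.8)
The plan is to show that $\mathcal{F}_{\mathbf{z}^e}$ is a continuous bijection with continuous inverse between $SE(3)$ and $\mathbf{C}^+_{\mathbf{z}^e}$. Continuity of $\mathcal{F}_{\mathbf{z}^e}$ is immediate since each component $R\boldsymbol{\alpha}^e$, $R\boldsymbol{\beta}^e$, $R\boldsymbol{\gamma}^e$ depends polynomially on the entries of $R$ and the $\boldsymbol{\Pi}$-component is the identity. The identity $\mathcal{F}_{\mathbf{z}^e}(\boldsymbol{\Pi}^e,I_3)=\mathbf{z}^e$ follows by direct substitution, using that $(\boldsymbol{\alpha}^e,\boldsymbol{\beta}^e,\boldsymbol{\gamma}^e)=(\mathbf{i},\mathbf{j},\mathbf{k})$ is the standard basis, so $I_3$ fixes each of them. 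The real content is the bijectivity together with the claim that the image is exactly $\mathbf{C}^+_{\mathbf{z}^e}$.

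First I would verify that $\mathcal{F}_{\mathbf{z}^e}$ maps into $\mathbf{C}^+_{\mathbf{z}^e}$: for $R\in SO(3)$, the triple $(R\boldsymbol{\alpha}^e, R\boldsymbol{\beta}^e, R\boldsymbol{\gamma}^e)$ is again an orthonormal triple, so all the Casimir values $\|\boldsymbol{\alpha}\|^2$, $\langle\boldsymbol{\alpha},\boldsymbol{\beta}\rangle$, $\langle\boldsymbol{\alpha},\boldsymbol{\gamma}\rangle$, $\|\boldsymbol{\beta}\|^2$, $\langle\boldsymbol{\beta},\boldsymbol{\gamma}\rangle$, $\|\boldsymbol{\gamma}\|^2$ are preserved (orthogonal transformations preserve inner products), hence $\mathcal{F}_{\mathbf{z}^e}(\boldsymbol{\Pi},R)\in\mathbf{C}^{-1}(\mathbf{C}(\mathbf{z}^e))$; moreover $\langle R\boldsymbol{\alpha}^e,R\boldsymbol{\beta}^e\times R\boldsymbol{\gamma}^e\rangle=\det(R)\langle\boldsymbol{\alpha}^e,\boldsymbol{\beta}^e\times\boldsymbol{\gamma}^e\rangle=\langle\boldsymbol{\alpha}^e,\boldsymbol{\beta}^e\times\boldsymbol{\gamma}^e\rangle$ since $\det R=1$, so the sign condition holds. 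Conversely, given $\mathbf{z}=(\boldsymbol{\Pi},\boldsymbol{\alpha},\boldsymbol{\beta},\boldsymbol{\gamma})\in\mathbf{C}^+_{\mathbf{z}^e}$, the Casimir constraints force $(\boldsymbol{\alpha},\boldsymbol{\beta},\boldsymbol{\gamma})$ to be an orthonormal triple with the same orientation as $(\boldsymbol{\alpha}^e,\boldsymbol{\beta}^e,\boldsymbol{\gamma}^e)$, i.e. a positively oriented orthonormal basis; then the matrix $R$ whose action sends $(\boldsymbol{\alpha}^e,\boldsymbol{\beta}^e,\boldsymbol{\gamma}^e)$ to $(\boldsymbol{\alpha},\boldsymbol{\beta},\boldsymbol{\gamma})$ is uniquely determined and lies in $SO(3)$. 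Explicitly, since $(\boldsymbol{\alpha}^e,\boldsymbol{\beta}^e,\boldsymbol{\gamma}^e)$ is the standard basis, one simply takes $R=[\boldsymbol{\alpha}\mid\boldsymbol{\beta}\mid\boldsymbol{\gamma}]$, the matrix with these columns; this makes the inverse map $\mathbf{z}\mapsto(\boldsymbol{\Pi},[\boldsymbol{\alpha}\mid\boldsymbol{\beta}\mid\boldsymbol{\gamma}])$ manifestly continuous (in fact the restriction of a linear map). Uniqueness of $R$ gives injectivity of $\mathcal{F}_{\mathbf{z}^e}$, and the construction gives surjectivity onto $\mathbf{C}^+_{\mathbf{z}^e}$.

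Assembling these pieces: $\mathcal{F}_{\mathbf{z}^e}$ is continuous, its set-theoretic inverse $\mathbf{z}\mapsto(\boldsymbol{\Pi},[\boldsymbol{\alpha}\mid\boldsymbol{\beta}\mid\boldsymbol{\gamma}])$ is also continuous, and the two are mutually inverse on the stated domains, so $\mathcal{F}_{\mathbf{z}^e}$ is a homeomorphism. I expect the only genuinely delicate point to be the careful bookkeeping showing that the six Casimir equations, whose common level set through $\mathbf{z}^e$ encodes $\|\boldsymbol{\alpha}\|=\|\boldsymbol{\beta}\|=\|\boldsymbol{\gamma}\|=1$ and $\boldsymbol{\alpha}\perp\boldsymbol{\beta}\perp\boldsymbol{\gamma}\perp\boldsymbol{\alpha}$, really do cut out the full set of orthonormal triples (and then the sign condition selects exactly the positively oriented ones); everything else is routine linear algebra about $SO(3)$ acting simply transitively on positively oriented orthonormal bases of $\R^3$.
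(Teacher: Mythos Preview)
Your proposal is correct and follows essentially the same approach as the paper: verify that the image lies in $\mathbf{C}^+_{\mathbf{z}^e}$ via preservation of inner products and the triple product under $SO(3)$, then obtain bijectivity from the fact that $SO(3)$ acts simply transitively on positively oriented orthonormal bases. Your treatment is in fact slightly more complete, since you exhibit the inverse explicitly as $(\boldsymbol{\Pi},\boldsymbol{\alpha},\boldsymbol{\beta},\boldsymbol{\gamma})\mapsto(\boldsymbol{\Pi},[\boldsymbol{\alpha}\mid\boldsymbol{\beta}\mid\boldsymbol{\gamma}])$ and thereby make continuity of the inverse transparent, whereas the paper only argues bijectivity.
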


\begin{proof}
If $R\in SO(3)$ and ${\bf u},{\bf v}\in \R^3$ then we have $<R{\bf u},R{\bf v}>=<{\bf u},{\bf v}>$ and consequently $(\boldsymbol{\Pi}, R\boldsymbol{\alpha}^e, R\boldsymbol{\beta}^e, R\boldsymbol{\gamma}^e)\in {\bf C}^{-1}({\bf C}({\bf z}^e))$. Also, we have
$$<R\boldsymbol{\alpha}^e,R\boldsymbol{\beta}^e\times R\boldsymbol{\gamma}^e>=<R\boldsymbol{\alpha}^e,R(\boldsymbol{\beta}^e\times \boldsymbol{\gamma}^e)>=<\boldsymbol{\alpha}^e,\boldsymbol{\beta}^e\times \boldsymbol{\gamma}^e>$$
which implies that $(\boldsymbol{\Pi}, R\boldsymbol{\alpha}^e, R\boldsymbol{\beta}^e, R\boldsymbol{\gamma}^e)\in{\bf C}^+_{{\bf z}^e}$.

Let $(\boldsymbol{\Pi},\boldsymbol{\alpha},\boldsymbol{\beta},\boldsymbol{\gamma})\in {\bf C}^+_{{\bf z}^e}$.  The sets $\{\boldsymbol{\alpha},\boldsymbol{\beta},\boldsymbol{\gamma}\}$ and $\{\boldsymbol{\alpha}^e,\boldsymbol{\beta}^e,\boldsymbol{\gamma}^e\}$ are orthonormal bases in $\R^3$ with the same orientation and consequently, there exists $R\in SO(3)$ such that $(\boldsymbol{\alpha},\boldsymbol{\beta},\boldsymbol{\gamma})=(R\boldsymbol{\alpha}^e,R\boldsymbol{\beta}^e,R\boldsymbol{\gamma}^e)$. We have the surjectivity.

If $\mathcal{F}_{{\bf z}^e}(\boldsymbol{\Pi},R)=\mathcal{F}_{{\bf z}^e}(\tilde{\boldsymbol{\Pi}},\tilde{R})$, then we have $\boldsymbol{\Pi}=\tilde{\boldsymbol{\Pi}}$, $R^T\tilde{R}\boldsymbol{\alpha}^e=\boldsymbol{\alpha}^e$, $R^T\tilde{R}\boldsymbol{\beta}^e=\boldsymbol{\beta}^e$, and $R^T\tilde{R}\boldsymbol{\gamma}^e=\boldsymbol{\gamma}^e$.
Because $\{\boldsymbol{\alpha}^e,\boldsymbol{\beta}^e,\boldsymbol{\gamma}^e\}$ is a base of $\R^3$ we have that $R^T\tilde{R}=\mathbb{I}_3$ and consequently $R=\tilde{R}$.  We have the injectivity.
\end{proof}

We construct a double covering map between $\R^3\times S^3$ and $SE(3)$, where $$S^3=\{{\bf q}:=(q_0,q_1,q_2,q_3)\in \R^4\,|\,q_0^2+q_1^2+q_2^2+q_3^2=1\}$$ is the 3-sphere.
For  ${\bf q}\in S^3$ we consider the rotation matrix:
 \begin{equation*}
\mathbf{R}^{\mathbf{q}}=\left(
\begin{array}{ccc}
q_0^{2}+q_{1}^{2}-q_{2}^{2}-q_{3}^{2} & 2(q_{1}q_{2}-q_{0}q_{3}) & 2(q_{1}q_{3}+q_{0}q_{2}) \\
2(q_{1}q_{2}+q_{0}q_{3})  & q_0^{2}-q_{1}^{2}+q_{2}^{2}-q_{3}^{2}& 2(q_{2}q_{3}-q_{0}q_{1}) \\
2(q_{1}q_{3}-q_{0}q_{2}) & 2(q_{2}q_{3}+q_{0}q_{1}) & q_0^{2}-q_{1}^{2}-q_{2}^{2}+q_{3}^{2}
\end{array}
\right).
\end{equation*}
The function  $\mathbf{P}:S^{3}\rightarrow SO(3)$, $\mathbf{P}(\mathbf{q})=R^{\mathbf{q}}$ is a smooth double covering map.\footnote{ For $R\in SO(3)$ there exists two distinct points ${\bf q}_1,{\bf q}_2\in S^3$ such that ${\bf P}({\bf q}_1)={\bf P}({\bf q}_2)=R$. ${\bf P}$ is a continuous surjective function with the following property: for all $R\in SO(3)$ there exists $U_R$ an open neighborhood of $R$ (evenly-covered neighborhood) such that ${\bf P}^{-1}(U_R)$ is a union of two disjoint open sets in $S^3$ (sheets over $U_R$), each of which is mapped homeomorphically onto $U_R$ by ${\bf P}$.} For ${\bf q}\in S^3$ we have $R^{\mathbf{q}}=R^{\mathbf{-q}}$ and $I_3=R^{(1,\mathbf{0})}=R^{(-1,\mathbf{0})}$.
We construct the double covering map $\widetilde{\bf P}:\R^3\times S^3\rightarrow SE(3)$ given by
$\widetilde{\bf P}(\boldsymbol{\Pi},{\bf q})=(\boldsymbol{\Pi},{\bf P}({\bf q})),$
which has the property $\widetilde{\bf P}(\boldsymbol{\Pi}^e,(1,{\bf 0}))=(\boldsymbol{\Pi}^e,I_3)$.

Between $\R^3\times B_3({\bf 0},1)\subset\R^6$ and $\R^3\times S^3_+:=\R^3\times \{{\bf q}=(q_0,q_1,q_2,q_3)\in S^3\,|\,q_0>0\}\subset \R^3\times S^3$ we have the following homeomorphism $\mathcal{G}(\boldsymbol{\Pi},{\bf p})=(\boldsymbol{\Pi},(\sqrt{1-||{\bf p}||^2},{\bf p}))$. It  has the property $\mathcal{G}(\boldsymbol{\Pi}^e,{\bf 0})=(\boldsymbol{\Pi}^e,(1,{\bf 0}))$.

The above considerations lead to the following result.

\begin{prop}\label{chart-spacecraft}
There exists an open subset $U\subset {\bf C}^+_{{\bf z}^e}\subset  {\bf C}^{-1}({\bf C}({\bf z}^e))$ such that ${\bf z}^e\in U$ and the set $U$ with the function $(\mathcal{F}_{{\bf z}^e}\circ \tilde{\bf P}\circ\mathcal{G})^{-1}:U\rightarrow V\subset \R^6$ is a coordinate chart.
\end{prop}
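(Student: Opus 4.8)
The plan is to realise the required chart as the inverse of the single composition $\Phi:=\mathcal{F}_{{\bf z}^e}\circ\widetilde{\bf P}\circ\mathcal{G}$, defined on all of $\R^3\times B_3({\bf 0},1)$, and to prove that $\Phi$ is a homeomorphism from $\R^3\times B_3({\bf 0},1)$ onto an open subset of the symplectic leaf ${\bf C}^{-1}({\bf C}({\bf z}^e))$ that contains ${\bf z}^e$. First I would check that $\Phi$ is well defined: $\mathcal{G}$ maps $\R^3\times B_3({\bf 0},1)$ onto $\R^3\times S^3_+\subset\R^3\times S^3$, which is the domain of $\widetilde{\bf P}$, and $\widetilde{\bf P}$ sends this into $SE(3)$, the domain of $\mathcal{F}_{{\bf z}^e}$. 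Chaining the three base-point identities $\mathcal{G}(\boldsymbol{\Pi}^e,{\bf 0})=(\boldsymbol{\Pi}^e,(1,{\bf 0}))$, $\widetilde{\bf P}(\boldsymbol{\Pi}^e,(1,{\bf 0}))=(\boldsymbol{\Pi}^e,I_3)$ and $\mathcal{F}_{{\bf z}^e}(\boldsymbol{\Pi}^e,I_3)={\bf z}^e$ then gives $\Phi(\boldsymbol{\Pi}^e,{\bf 0})={\bf z}^e$.

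The heart of the argument, which I expect to be the main obstacle, is to show that the restriction of $\widetilde{\bf P}$ to $\R^3\times S^3_+$ is a homeomorphism onto an open subset of $SE(3)$; equivalently, that ${\bf P}|_{S^3_+}$ is a homeomorphism onto an open subset of $SO(3)$. Injectivity of ${\bf P}|_{S^3_+}$ follows from the structure of the fibres of the double covering ${\bf P}$: if ${\bf P}({\bf q})={\bf P}({\bf q}')$ then ${\bf q}'=\pm{\bf q}$, and the choice ${\bf q}'=-{\bf q}$ is impossible for ${\bf q},{\bf q}'\in S^3_+$ since it would reverse the sign of the first coordinate. For openness I would invoke that a covering map is in particular an open local homeomorphism, and that the restriction of such a map to the open set $S^3_+$ is still open and locally homeomorphic; hence ${\bf P}(S^3_+)$ is open in $SO(3)$, and the continuous open injection ${\bf P}|_{S^3_+}:S^3_+\to{\bf P}(S^3_+)$ is a homeomorphism. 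Taking the product with the identity map of $\R^3$ yields the corresponding statement for $\widetilde{\bf P}$, whose image is then the open set $\R^3\times{\bf P}(S^3_+)\subset SE(3)$.

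With this in hand the conclusion follows by composing homeomorphisms: $\mathcal{G}$ is a homeomorphism onto $\R^3\times S^3_+$, $\widetilde{\bf P}|_{\R^3\times S^3_+}$ is a homeomorphism onto $\R^3\times{\bf P}(S^3_+)$, and $\mathcal{F}_{{\bf z}^e}$ is a homeomorphism of $SE(3)$ onto ${\bf C}^+_{{\bf z}^e}$ by the Lemma, so $\Phi$ is a homeomorphism from $\R^3\times B_3({\bf 0},1)$ onto $U:=\mathcal{F}_{{\bf z}^e}(\R^3\times{\bf P}(S^3_+))$. Since $\R^3\times{\bf P}(S^3_+)$ is open in $SE(3)$ and $\mathcal{F}_{{\bf z}^e}$ carries open sets of $SE(3)$ to open subsets of ${\bf C}^+_{{\bf z}^e}$, the set $U$ is open in ${\bf C}^+_{{\bf z}^e}$; and as ${\bf C}^+_{{\bf z}^e}$ is open in ${\bf C}^{-1}({\bf C}({\bf z}^e))$, so is $U$. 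Moreover ${\bf z}^e=\Phi(\boldsymbol{\Pi}^e,{\bf 0})\in U$ because $(1,{\bf 0})\in S^3_+$. Therefore $(\mathcal{F}_{{\bf z}^e}\circ\widetilde{\bf P}\circ\mathcal{G})^{-1}=\Phi^{-1}:U\to V:=\R^3\times B_3({\bf 0},1)\subset\R^6$ is a homeomorphism, i.e. the asserted coordinate chart on the symplectic leaf around ${\bf z}^e$. Since $\mathcal{F}_{{\bf z}^e}$ is available only as a homeomorphism, one obtains a topological chart, which is all that is needed to express $H$ in these coordinates in the sequel.
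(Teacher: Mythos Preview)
Your proposal is correct and follows exactly the route the paper intends: the paper does not give a separate proof of this proposition but simply writes ``the above considerations lead to the following result,'' leaving it to the reader to compose the homeomorphism $\mathcal{G}$, the restricted covering map $\widetilde{\bf P}$, and the homeomorphism $\mathcal{F}_{{\bf z}^e}$ from the Lemma. You have spelled out precisely those details, including the key point that ${\bf P}|_{S^3_+}$ is injective (since the two preimages of any rotation are antipodal) and open (as the restriction of a covering map to an open set), hence a homeomorphism onto its open image.
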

The Hamiltonian function written in the chart defined in Proposition \ref{chart-spacecraft} is:
$$H_{{\bf z}^e}(\boldsymbol{\Pi},{\bf p})=H(\boldsymbol{\Pi},R^{(\sqrt{1-||{\bf p}||^2},{\bf p})}\boldsymbol{\alpha}^e, R^{(\sqrt{1-||{\bf p}||^2},{\bf p})}\boldsymbol{\beta}^e, R^{(\sqrt{1-||{\bf p}||^2},{\bf p})}\boldsymbol{\gamma}^e).$$

We have the following stability result.

\begin{thm}\label{chart-spacecraft-Hess}
Let ${\bf z}^e$ be the equilibrium point described by \eqref{equilibrium-spacecraft}.
If the point $(\boldsymbol{\Pi}^e,{\bf 0})$ is  a strict local extremum of $H_{{\bf z}^e}$, then the equilibrium point ${\bf z}^e$ is Lyapunov stable.

\end{thm}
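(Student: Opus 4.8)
The plan is to apply the energy--Casimir (algebraic) method of \cite{comanescu}: exhibit a continuous function that is conserved along \eqref{spacecraft-asteroid-eq} and has a strict local extremum at ${\bf z}^e$, and then conclude by the classical Lyapunov stability theorem. The flow conserves $H$ together with the six Casimirs $C_{ij}$, hence every function of the form $H+\Phi({\bf C})$; the issue is to choose $\Phi$ so that this function has a strict local extremum at ${\bf z}^e$ in $\R^{12}$, not merely along the symplectic leaf. Replacing $H$ by $-H$ if necessary (still a conserved quantity), we may assume that $H_{{\bf z}^e}$ has a strict local \emph{minimum} at $(\boldsymbol{\Pi}^e,{\bf 0})$.

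First I would transfer the hypothesis onto the leaf $L:={\bf C}^{-1}({\bf C}({\bf z}^e))$. By Proposition \ref{chart-spacecraft}, $H|_L$ is represented near ${\bf z}^e$ by $H_{{\bf z}^e}$ in a chart of $L$; since a strict local extremum is preserved by homeomorphisms, $H|_L$ has a strict local minimum at ${\bf z}^e$. Because ${\bf z}^e$ is a regular point of ${\bf C}$, the submersion theorem furnishes a smooth chart $(\xi,\eta)\in\R^6\times\R^6$ on a neighbourhood $W$ of ${\bf z}^e$ in $\R^{12}$ with $\eta={\bf C}-{\bf C}({\bf z}^e)$, ${\bf z}^e\leftrightarrow(\xi^e,{\bf 0})$, $L\cap W=\{\eta={\bf 0}\}$, and with each slice $\{\eta=\mathrm{const}\}$ invariant under the flow (as ${\bf C}$ is conserved). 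Finally, since ${\bf z}^e$ is an equilibrium whose symplectic leaf is six--dimensional while $\mathrm{rank}\,\nabla{\bf C}({\bf z}^e)=6$, one has $\nabla H({\bf z}^e)\in\mathrm{span}\{\nabla C_{ij}({\bf z}^e)\}$; fixing constants $\mu_{ij}$ accordingly, $\widetilde H:=H-\sum_{i\le j}\mu_{ij}C_{ij}$ is conserved, satisfies $\nabla\widetilde H({\bf z}^e)={\bf 0}$, and coincides with $H$ up to an additive constant on $L$.

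Then I would prove that $\mathcal{L}:=\widetilde H+\|{\bf C}-{\bf C}({\bf z}^e)\|$ is a Lyapunov function at ${\bf z}^e$: it is continuous and conserved, and it has a strict local minimum at ${\bf z}^e$. Writing $\widetilde H=h(\xi,\eta)$ in the chart, $h$ is smooth with $\nabla h(\xi^e,{\bf 0})={\bf 0}$, so a first--order Taylor estimate in the $\eta$ variable yields $|h(\xi,\eta)-h(\xi,{\bf 0})|\le C\,(\|\xi-\xi^e\|+\|\eta\|)\,\|\eta\|$ on $W$, and therefore
\[
\mathcal{L}(\xi,\eta)-\mathcal{L}(\xi^e,{\bf 0})\ \ge\ \bigl[h(\xi,{\bf 0})-h(\xi^e,{\bf 0})\bigr]\ +\ \|\eta\|\bigl(1-C\|\xi-\xi^e\|-C\|\eta\|\bigr).
\]
Near ${\bf z}^e$ the first bracket is $\ge 0$ (strict local minimum of $H|_L$) and the last factor is $\ge\tfrac12$, and simultaneous equality forces $\eta={\bf 0}$ and then $\xi=\xi^e$. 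The usual Lyapunov argument now applies: small sublevel sets of $\mathcal{L}$ around ${\bf z}^e$ are compact, positively invariant and shrink to ${\bf z}^e$, which gives Lyapunov stability (and forward completeness of the nearby orbits). An equivalent route is purely topological: $c\mapsto\min\{H({\bf z}):{\bf z}\in{\bf C}^{-1}(c)\cap\partial B({\bf z}^e,\varepsilon)\}$ is continuous and strictly exceeds $H({\bf z}^e)$ at $c={\bf C}({\bf z}^e)$, hence for all nearby $c$; conservation of $H$ and of ${\bf C}$ then confines any trajectory starting close to ${\bf z}^e$ inside $B({\bf z}^e,\varepsilon)$.

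The main obstacle is the passage from stability \emph{within} the leaf (the classical energy argument, which uses only that $H_{{\bf z}^e}$ has a strict extremum) to stability in the ambient $\R^{12}$; this is precisely where one must exploit that ${\bf z}^e$ is a \emph{generic} equilibrium, so that ${\bf C}$ defines a transverse foliation near ${\bf z}^e$ and the equilibrium condition annihilates the transverse linear part of $\widetilde H$. Some care is needed because the familiar smooth penalty $N\|{\bf C}-{\bf C}({\bf z}^e)\|^2$ is \emph{not} enough when the extremum of $H|_L$ is degenerate (then $H$ may drop quadratically off $L$ and overcome the penalty); using the non--smooth but continuous term $\|{\bf C}-{\bf C}({\bf z}^e)\|$, or equivalently the compactness argument above, is what removes this difficulty. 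Since all of this is exactly the algebraic method of \cite{comanescu}, the shortest write--up simply invokes that theorem after verifying its hypotheses: ${\bf z}^e$ is a generic equilibrium (noted after \eqref{equilibrium-spacecraft}) and, by Proposition \ref{chart-spacecraft}, the restriction of $H$ to the symplectic leaf is given by $H_{{\bf z}^e}$.
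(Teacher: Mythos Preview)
Your proposal is correct and follows the same approach as the paper---the algebraic method of \cite{comanescu}---with the key observation that a strict local extremum of $H_{{\bf z}^e}$ at $(\boldsymbol{\Pi}^e,{\bf 0})$ forces the algebraic system $H({\bf z})=H({\bf z}^e)$, $C_{ij}({\bf z})=C_{ij}({\bf z}^e)$ to have ${\bf z}^e$ as its only local root. The paper's proof is exactly your ``shortest write-up'': it records this isolation of the root and invokes \cite{comanescu}, without the explicit Lyapunov-function construction you spell out.
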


\begin{proof} We use the algebraic method, presented in the paper \cite{comanescu} and used in the papers \cite{comanescu-1}, \cite{comanescu-2} and \cite{birtea-casu}.
If $(\boldsymbol{\Pi}^e,{\bf 0})$ is  a strict local extremum of $H_{{\bf z}^e}$, then the algebraic system
\begin{equation}\label{algebraic-system}
H({\bf z})=H({\bf z}^e),\,C_{ij}({\bf z})=C_{ij}({\bf z}^e),\,\,1\leq i\leq j\leq 3
\end{equation}
 has no root besides ${\bf z}^e$ in some neighborhood of ${\bf z}^e$. Consequently, the equilibrium point is Lyapunov stable .
\end{proof}
An immediate consequence is the following result.
\begin{thm}\label{conditions-stability-Hessian}
Let ${\bf z}^e$ be the equilibrium point due by \eqref{equilibrium-spacecraft}.
If $(\boldsymbol{\Pi}^e,{\bf 0})$ is a stationary point for $H_{{\bf z}^e}$ and the Hessian matrix $\text{Hess}\,H_{{\bf z}^e}(\boldsymbol{\Pi}^e,{\bf 0})$ is positive or negative definite, then the equilibrium point ${\bf z}^e$ is Lyapunov stable. 
\end{thm}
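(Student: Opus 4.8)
The plan is to deduce Theorem \ref{conditions-stability-Hessian} directly from Theorem \ref{chart-spacecraft-Hess} by invoking the classical second-order sufficient condition for a strict local extremum. The only substantive thing to verify beforehand is that $H_{{\bf z}^e}$ has enough regularity near $(\boldsymbol{\Pi}^e,{\bf 0})$ for that condition to apply.

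First I would note that $H_{{\bf z}^e}$ is of class $C^2$ (in fact $C^\infty$) on a neighborhood of $(\boldsymbol{\Pi}^e,{\bf 0})$ in $\R^6$. Indeed, the entries of $R^{(\sqrt{1-||{\bf p}||^2},{\bf p})}$ are smooth functions of ${\bf p}$ on the open ball $B_3({\bf 0},1)$: the only non-polynomial ingredient, $\sqrt{1-||{\bf p}||^2}$, is smooth there and equals $1$ at ${\bf p}={\bf 0}$. Since $H$ is smooth, the composition $(\boldsymbol{\Pi},{\bf p})\mapsto H(\boldsymbol{\Pi},R^{(\sqrt{1-||{\bf p}||^2},{\bf p})}\boldsymbol{\alpha}^e, R^{(\sqrt{1-||{\bf p}||^2},{\bf p})}\boldsymbol{\beta}^e, R^{(\sqrt{1-||{\bf p}||^2},{\bf p})}\boldsymbol{\gamma}^e)$ is smooth near $(\boldsymbol{\Pi}^e,{\bf 0})$, so the second-order Taylor expansion with Peano remainder is legitimate.

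Then I would apply the standard fact from multivariable calculus: if $(\boldsymbol{\Pi}^e,{\bf 0})$ is a stationary point of the $C^2$ function $H_{{\bf z}^e}$ and $\text{Hess}\,H_{{\bf z}^e}(\boldsymbol{\Pi}^e,{\bf 0})$ is positive definite (respectively negative definite), then $(\boldsymbol{\Pi}^e,{\bf 0})$ is a strict local minimum (respectively a strict local maximum) of $H_{{\bf z}^e}$, hence in either case a strict local extremum. Finally, Theorem \ref{chart-spacecraft-Hess} applied to this strict local extremum yields that the equilibrium point ${\bf z}^e$ is Lyapunov stable, which is the desired conclusion.

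Since each step is an immediate invocation of a textbook result, there is essentially no obstacle here; the statement is genuinely a corollary. The only point that deserves a moment's attention is the regularity check above — without it, definiteness of the Hessian would only describe the second-order behavior of $H_{{\bf z}^e}$ and not, by itself, force a strict local extremum — and this is settled by the smoothness of $\mathcal{F}_{{\bf z}^e}$, $\widetilde{\bf P}$ and $\mathcal{G}$ already recorded in the construction preceding Proposition \ref{chart-spacecraft}.
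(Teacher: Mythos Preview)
Your proposal is correct and follows exactly the route the paper intends: the paper introduces Theorem \ref{conditions-stability-Hessian} with the phrase ``An immediate consequence is the following result'' and gives no separate proof, so the argument is precisely to pass from the definiteness hypothesis on the Hessian to a strict local extremum and then invoke Theorem \ref{chart-spacecraft-Hess}. Your added remark on the $C^2$ regularity of $H_{{\bf z}^e}$ near $(\boldsymbol{\Pi}^e,{\bf 0})$ is a welcome clarification but does not depart from the paper's approach.
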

We can announce the main result of this section.

\begin{thm}\label{stability-spacecraft} A sufficient condition for the Lyapunov stability of the equilibrium point \eqref{equilibrium-spacecraft} is:
\begin{align*}
& (I_3-I_1)(k_1-k_3)>0; \\
\text{and}\,\, &  (I_2-I_1)(\omega_T^2-2(k_2-k_1))>0; \\
\text{and}\,\, &  (I_2-I_3)(\omega_T^2-2(k_2-k_3))>0.
\end{align*}
\end{thm}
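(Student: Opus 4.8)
The plan is to apply Theorem \ref{conditions-stability-Hessian}: it suffices to show that under the stated inequalities, the point $(\boldsymbol{\Pi}^e,{\bf 0})$ is a stationary point of $H_{{\bf z}^e}$ and that $\text{Hess}\,H_{{\bf z}^e}(\boldsymbol{\Pi}^e,{\bf 0})$ is positive or negative definite. First I would write out $H_{{\bf z}^e}(\boldsymbol{\Pi},{\bf p})$ explicitly by substituting the columns of $R^{(\sqrt{1-||{\bf p}||^2},{\bf p})}$ into the formula for $H$. Since $\boldsymbol{\alpha}^e={\bf i}$, $\boldsymbol{\beta}^e={\bf j}$, $\boldsymbol{\gamma}^e={\bf k}$, the rotated vectors $R\boldsymbol{\alpha}^e, R\boldsymbol{\beta}^e, R\boldsymbol{\gamma}^e$ are exactly the first, second and third columns of $R^{\bf q}$ with $q_0=\sqrt{1-||{\bf p}||^2}$, $(q_1,q_2,q_3)={\bf p}$. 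Because $\mathbb{I}=\text{diag}(I_1,I_2,I_3)$, the potential term $V$ becomes a polynomial in the entries of these columns, and $\omega_T<\boldsymbol{\Pi},\boldsymbol{\beta}>=\omega_T<\boldsymbol{\Pi}, R^{\bf q}{\bf j}>$ is the middle column of $R^{\bf q}$ paired with $\boldsymbol{\Pi}$.

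Next I would compute the gradient at $(\boldsymbol{\Pi}^e,{\bf 0})$ and check it vanishes. At ${\bf p}={\bf 0}$ we have $q_0=1$, $R^{\bf q}=I_3$, and the derivatives of the column entries with respect to $p_i$ are the generators of rotations; one checks directly that the first-order terms cancel thanks to the choice $\boldsymbol{\Pi}^e=-\omega_T I_2{\bf j}$ (the kinetic plus the $\omega_T$-coupling term is stationary in $\boldsymbol{\Pi}$, and the mixed $\boldsymbol{\Pi}$–${\bf p}$ and pure-${\bf p}$ first-order contributions vanish by symmetry/antisymmetry of the rotation generators). Then I would expand $H_{{\bf z}^e}$ to second order. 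The key simplification is that the quadratic form in the $(\boldsymbol{\Pi},{\bf p})$ variables block-diagonalizes: the $\boldsymbol{\Pi}$-block is $\frac{1}{2}\mathbb{I}^{-1}$ (positive definite), the cross terms between $\delta\boldsymbol{\Pi}$ and ${\bf p}$ pair up in a way that, after completing the square, shifts the ${\bf p}$-block, and crucially the three components $p_1,p_2,p_3$ decouple from each other because $\mathbb{I}$ is diagonal and $\boldsymbol{\Pi}^e$ points along ${\bf j}$. This should reduce the definiteness question to three scalar conditions, one per $p_i$.

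The main obstacle I expect is the bookkeeping in the second-order expansion: $q_0=\sqrt{1-||{\bf p}||^2}=1-\frac12||{\bf p}||^2+O(||{\bf p}||^4)$ contributes a nontrivial quadratic correction through the $q_0^2$ and $q_0 q_i$ entries of $R^{\bf q}$, and these must be combined correctly with the genuinely quadratic $q_iq_j$ entries. I would organize this by noting $R^{\bf q}{\bf e}_k={\bf e}_k + 2(\,\text{linear in }{\bf p}\,) + (\text{quadratic in }{\bf p}) + \dots$, track each column to order two, and plug into $H$. After this, reducing the Hessian to block form and computing the three $2\times 2$ (or scalar, after the $\delta\boldsymbol{\Pi}$ elimination) determinants should yield precisely the products $(I_3-I_1)(k_1-k_3)$, $(I_2-I_1)(\omega_T^2-2(k_2-k_1))$, $(I_2-I_3)(\omega_T^2-2(k_2-k_3))$; requiring all three positive makes $\text{Hess}\,H_{{\bf z}^e}(\boldsymbol{\Pi}^e,{\bf 0})$ positive definite (one checks the signs are consistent with a minimum rather than a saddle), and Theorem \ref{conditions-stability-Hessian} then gives Lyapunov stability. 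Finally I would remark that these are exactly the sufficient conditions of \cite{wang-xu}, obtained here from a $6\times6$ Hessian rather than their $12\times12$ analysis.
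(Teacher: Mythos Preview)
Your proposal is correct and follows essentially the same route as the paper: compute $H_{{\bf z}^e}$ in the chart, verify that $(\boldsymbol{\Pi}^e,{\bf 0})$ is a critical point, write down the $6\times 6$ Hessian, observe that it decouples into the $\Pi_2$-block, the $p_2$-block, and the two $2\times 2$ blocks $(\Pi_3,p_1)$ and $(\Pi_1,p_3)$, and check positive definiteness via the three stated inequalities before invoking Theorem~\ref{conditions-stability-Hessian}. The paper simply displays the explicit Hessian (with entries $h_{44}=4(\omega_T^2I_2+2(I_3-I_2)(k_2-k_3))$ and $h_{66}=4(\omega_T^2I_2+2(I_1-I_2)(k_2-k_1))$) rather than describing the expansion step by step, but the argument is the same.
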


\begin{proof}
For the computations we use the components of the vectors in the fixed-body frame. We have
$$\boldsymbol{\alpha}=(1-2p_2^2-2p_3^2, 2p_1p_2+2p_3\sqrt{1-(p_1^2+p_2^2+p_3^2)}, 2p_1p_3-2p_2\sqrt{1-(p_1^2+p_2^2+p_3^2)}),$$
$$\boldsymbol{\beta}=(2p_1p_2-2p_3\sqrt{1-(p_1^2+p_2^2+p_3^2)}, 1-2p_1^2-2p_3^2, 2p_2p_3+2p_1\sqrt{1-(p_1^2+p_2^2+p_3^2)}),$$
$$\boldsymbol{\gamma}=(2p_1p_3+2p_2\sqrt{1-(p_1^2+p_2^2+p_3^2)}, 2p_2p_3-2p_1\sqrt{1-(p_1^2+p_2^2+p_3^2)}, 1-2p_1^2-2p_2^2),$$
and we observe that $(\boldsymbol{\Pi}^e,{\bf 0})$ is a critical point for $H_{{\bf z}^e}$. The Hessian matrix is:
$$\text{Hess}\,H_{{\bf z}^e}(\boldsymbol{\Pi}^e,{\bf 0})=
\left(%
\begin{array}{cccccc}
\frac{1}{I_1} & 0 & 0 & 0 & 0 & -2\omega_T \\
0 & \frac{1}{I_2} & 0 & 0 & 0 & 0 \\
0 & 0 & \frac{1}{I_3} & 2\omega_T & 0 & 0 \\
0 & 0 & 2\omega _T & h_{44} & 0 & 0 \\
 0 & 0 & 0 & 0 & 8(I_3-I_1)(k_1-k_3) & 0 \\
-2\omega_T &   0 & 0 & 0 & 0 &   h_{66}
\end{array}%
\right),$$
where
$h_{44}=4(\omega_T^2I_2+2(I_3-I_2)(k_2-k_3))$ and $h_{66}= 4(\omega_T^2I_2+2(I_1-I_2)(k_2-k_1)).$
If we have the above inequalities, then the Hessian matrix $\text{Hess}\,H_{{\bf z}^e}(\boldsymbol{\Pi}^e,{\bf 0})$ is positive definite and we apply Theorem \ref{conditions-stability-Hessian}.
\end{proof}

\begin{rem}
The analysis of the above inequalities leads us to the following sufficient conditions for the stability  of the equilibrium point \eqref{equilibrium-spacecraft}.
\begin{itemize}
\item [(i)] $I_2>I_3>I_1$, and $k_1>k_3$ and $\omega_T^2>2(k_2-k_3)$;

\item [(ii)] $I_3>I_2>I_1$, and $k_1>k_3$ and $2(k_2-k_1)<\omega_T^2<2(k_2-k_3)$;

\item [(iii)] $I_3>I_1>I_2$, and $k_1>k_3$ and $\omega_T^2<2(k_2-k_1)$;

\item [(iv)] $I_2>I_1>I_3$, and $k_1<k_3$ and $\omega_T^2>2(k_2-k_1)$;

\item [(v)] $I_1>I_2>I_3$, and $k_1<k_3$ and $2(k_2-k_3)<\omega_T^2<2(k_2-k_1)$;

\item [(vi)] $I_1>I_3>I_2$, and $k_1<k_3$ and $\omega_T^2<2(k_2-k_3)$. $\Box$
\end{itemize}

\end{rem}

According to \cite{wang-xu} the radius of the stationary orbit $R_S$ satisfies the relation
\begin{equation}\label{Rs}
R_S^5-\frac{GM}{\omega_T^2}\left(R_S^2-\frac{3}{2}a_e^2C_{20}-9a_e^2C_{22}\right)=0,
\end{equation}
where $M$ is the mass of the asteroid, $G$ is the Gravitational constant, $a_e$ is the mean radius of the asteroid, and $C_{20}$, $C_{22}$ are the harmonic coefficients generated by the gravity field of the asteroid. Also, we have
\begin{equation}
k_1=\frac{3GMa_e^2C_{22}}{R_S^5},\,\,k_2=\frac{3GMa_e^2C_{20}}{2R_S^5},\,\, k_3=\frac{3GM}{2R_S^3}-\frac{3GMa_e^2}{4R_S^5}\left(5C_{20}+34C_{22}\right).
\end{equation}

If we use the above expressions of the coefficient $k_1,k_2$ and $k_3$, then the conditions (56a), (56b), and (56c) from the paper \cite{wang-xu} coincide with the inequalities from Theorem \ref{stability-spacecraft}. In the paper \cite{wang-xu} is used a modified energy-Casimir method in order to find the cited conditions. This method work with the eigenvalues of a $12\times 12$ matrix.

For a specific asteroid the coefficients $M,a_e,\omega_T,C_{20}$ and $C_{22}$ are specified.  The harmonic coefficients $C_{20}$ and $C_{22}$ are calculated by the formulas:
$$C_{20}=-\frac{1}{2Ma_e^2}(2I_w^A-I_u^A-I_v^A),\,\,C_{22}=\frac{1}{4Ma_e^2}(I_v^A-I_u^A),$$
where $I_u^A,I_v^A$ and $I_w^A$ are the principal moments of the asteroid which are calculated in the mass center of the asteroid.
The numbers of the positive solutions of the equation \eqref{Rs} can be characterized by the parameters of the asteroid. The equation can have two positive solutions, one positive solution or no positive solutions.
If we fix a radius of the stationary orbit around a specific asteroid, then we obtain fixed values of the coefficients $k_1,k_2$ and $k_3$.
\medskip

{\bf The stability of a spacecraft moving around the asteroid 4769 Castalia.}
According to \cite{scheers} the asteroid 4769 Castalia has the following physical data: $M=1.4091\cdot 10^{12}\,kg$, $a_e=543.1\,m$, $\omega_T=4.2882\cdot 10^{-4}\,s^{-1}$, $C_{20}=-7.257\cdot 10^{-2}$, $C_{22}=2.984\cdot 10^{-2}$. The universal constant has the value $G=6.67384\cdot 10^{-11}\,m^3kg^{-1}s^{-1}$. The equation \eqref{Rs} has two positive solutions: $R_{S1}=219.31\,m$ and $R_{S2}=778.39\,m$. Because $R_{S1}<a_e$ we can not have a stationary orbit with the radius $R_{S1}$. For a stationary orbit with the radius $R_{S2}$ we have $k_1<k_3$, $\omega_T^2>2(k_2-k_1)$ and,  consequently a sufficient condition for the stability of \eqref{equilibrium-spacecraft} is that the inertia moments of the spacecraft satisfies the inequalities $I_2>I_1>I_3$.

\section{Stability of an underwater vehicle}
Following \cite{leonard-1996}, the dynamics for a six degree-of-freedom vehicle modeled as a neutrally buoyant, submerged rigid body in an infinitely large volume of irrotational, incompressible, inviscid fluid that is at rest at infinity is described by the system
\begin{equation}\label{underwater-1}
\left\{%
\begin{array}{ll}
\dot{\boldsymbol \Pi}={\boldsymbol \Pi}\times {\boldsymbol \Omega}+{\bf Q}\times {\bf v}-mgl{\boldsymbol \Gamma}\times {\bf r} \\
\dot{\bf Q}={\bf Q}\times {\boldsymbol \Omega} \\
\dot{\boldsymbol \Gamma }={\boldsymbol \Gamma}\times {\boldsymbol \Omega},
\end{array}%
\right.
\end{equation}
where ${\boldsymbol \Pi}$ is the angular impulse, ${\bf Q}$ is the linear impulse, ${\boldsymbol \Gamma}$ is the direction of gravity, $l{\bf r}$ is the vector from center of buoyancy to the center of gravity (with $l\geq 0$ and ${\bf r}$ an unit vector), $m$ is the mass of the vehicle, $g$ is gravitational acceleration, ${\boldsymbol \Omega}$ and ${\bf v}$ are the angular and translational velocity of the vehicle. In a body-fixed frame with the origin in the centre of buoyancy the relationship between $({\boldsymbol \Pi},{\bf Q})$ and  $({\boldsymbol \Omega},{\bf v})$ is given by
\begin{equation}\label{relation-between}
\left(%
\begin{array}{c}
{\boldsymbol \Pi} \\
{\bf Q}
\end{array}%
\right)=\left(%
\begin{array}{cc}
J & D \\
D^T & M
\end{array}%
\right)\left(%
\begin{array}{c}
{\boldsymbol \Omega} \\
{\bf v}
\end{array}%
\right)
,
\end{equation}
where $J$ is the matrix that is the sum of the body inertia matrix plus the added inertia matrix associated with the potential flow model of the fluid, $M$ is the sum of the mass matrix for the body alone, and $D$ accounts for the cross terms.
According to \cite{leonard-automatica} we have
\begin{equation}\label{matrices-underwater}
M=m I_3+\Theta_{11}^f,\,\,J=J_b+\Theta_{11}^f,\,\,D=ml\widehat{\bf r}+\Theta_{21}^f.
\end{equation}
We denote by $I_3$ the identity matrix and by $J_b$ is the inertia matrix of the vehicle. The matrix
\begin{equation}\label{Theta-f}
\Theta^f=
\left(%
\begin{array}{cc}
\Theta_{11}^f &(\Theta_{21}^f)^T \\
\Theta_{21}^f & \Theta_{22}^f
\end{array}%
\right)
\end{equation}
is symmetric and it is determined by the configuration of the vehicle and the density of the fluid.
The relationship between $({\boldsymbol \Omega},{\bf v})^T$ and $({\boldsymbol \Pi},{\bf Q})^T$ is given by
\begin{equation}\label{relation-between}
\left(%
\begin{array}{c}
{\boldsymbol \Omega} \\
{\bf v}
\end{array}%
\right)=
\left(%
\begin{array}{cc}
A & B^T \\
B & C
\end{array}%
\right)
\left(%
\begin{array}{c}
{\boldsymbol \Pi} \\
{\bf Q}
\end{array}%
\right),\,\,\left(%
\begin{array}{cc}
A & B^T \\
B & C
\end{array}%
\right)=\left(%
\begin{array}{cc}
J & D \\
D^T & M
\end{array}%
\right)^{-1}.
\end{equation}
The matrix
$
\left(%
\begin{array}{cc}
A & B^T \\
B & C
\end{array}%
\right)$
 is symmetric and positive definite and consequently the matrices $A$ and $C$ are symmetric and positive definite.

The system \eqref{underwater-1} has the Hamilton-Poisson form (see \cite{leonard-automatica})
\begin{equation}\label{underwater-1-Poisson}
\dot{\bf u}=\Lambda({\bf u})\nabla H({\bf u}),
\end{equation}
where ${\bf u}=({\boldsymbol \Pi},{\bf Q},{\boldsymbol \Gamma})$,
$\Lambda({\bf u})=\left(%
\begin{array}{ccc}
\widehat{{\boldsymbol \Pi}} & \widehat{\bf Q} & \widehat{{\boldsymbol \Gamma}} \\
\widehat{\bf Q} & O_3 &O_3 \\
 \widehat{{\boldsymbol \Gamma}} & O_3 &O_3
\end{array}%
\right),$ and the Hamiltonian function is given by
$$H({\bf u})=\frac{1}{2}(<{\boldsymbol \Pi},A{\boldsymbol \Pi}>+2<{\boldsymbol \Pi},B^T{\bf Q}>+<{\bf Q},C{\bf Q}>-2mgl<{\boldsymbol \Gamma},{\bf r}>).
$$
In this case the Casimir functions are 
$C_{11}({\bf u})=||{\bf Q}||^2$, $C_{12}({\bf u})=<{\bf Q},{\boldsymbol \Gamma}>$ and $C_{22}({\bf u})=||{\boldsymbol \Gamma}||^2$ and we denote by ${\bf C}:\R^{9}\rightarrow \R^3$ the vectorial Casimir function.

We are interested in a generic equilibrium point ${\bf u}^e=({\boldsymbol \Pi}^e,{\bf Q}^e,{\boldsymbol \Gamma}^e)$ which satisfy
\begin{equation}\label{condition-equilibria}
{\bf Q}^e\neq {\bf 0},\,\,{\boldsymbol \Gamma}^e\neq {\bf 0},\,\,<{\bf Q}^e,{\boldsymbol \Gamma}^e>=0.
\end{equation}
This equilibria has the following properties:

\noindent (i) The equilibrium point has no spin. Because ${\bf u}^e$ is a generic equilibrium point we have
$${\boldsymbol \Omega}^e=A{\boldsymbol \Pi}^e+B^T{\bf Q}^e=\frac{\partial H}{\partial {\boldsymbol \Pi}}({\bf u}^e)={\bf 0}.$$

\noindent (ii) The translational velocity is ${\bf v}^e=B{\boldsymbol \Pi}^e+C{\bf Q}^e$.

\noindent (iii) The vector ${\bf r}$ is located in the plane generated by the vectors ${\bf Q}^e$ and ${\boldsymbol \Gamma}^e$ (we have $<{\bf Q}^e\times {\boldsymbol \Gamma}^e, {\bf r}>=0$).

\begin{rem}
In the paper \cite{birtea-comanescu} is presented a stability study for a nongeneric equilibria of the system \eqref{underwater-1} which is situated to singular symplectic leaves that are not characterized as a preimage o a regular value of the Casimir functions.
\end{rem}

For an equilibrium point ${\bf u}^e$ of our type we construct a coordinate chart on the regular symplectic leaf  ${\bf C}^{-1}({\bf C}({\bf u}^e))$ around the equilibrium point using the same scheme as in the previous section. The open subset ${\bf C}^+_{{\bf u}^e}\subset {\bf C}^{-1}({\bf C}({\bf u}^e))$ defined by:
$${\bf C}^+_{{\bf u}^e}=\{{\bf u}=({\bf x},{\bf y}_1,{\bf y}_2)\in {\bf C}^{-1}({\bf C}({\bf u}^e))\,|\,{\bf y}_1\times {\bf y}_2 \,\text{and}\,{\bf y}_1^e\times {\bf y}_2^e\,\text{have the same orientation}\}$$
contains the equilibrium point ${\bf u}^e$. Between $SE(3)$ and  ${\bf C}^+_{{\bf u}_e}$ we have the homeomorphism 
$$\mathcal{F}_{{\bf u}^e}(\boldsymbol{\Pi},R):=(\boldsymbol{\Pi}, R{\bf Q}^e, R\boldsymbol{\Gamma}^e)$$
with the property  $\mathcal{F}_{{\bf u}^e}(\boldsymbol{\Pi}^e,I_3):={\bf u}^e$. By using the notations of the previous section we have the following result.

\begin{prop}\label{chart-reduced}
There exists an open subset $U\subset {\bf C}^+_{{\bf u}^e}\subset  {\bf C}^{-1}({\bf C}({\bf u}^e))$ such that ${\bf u}^e\in U$  and $U$ with function $(\mathcal{F}_{{\bf u}^e}\circ \tilde{\bf P}\circ\mathcal{G})^{-1}:U\rightarrow V\subset \R^6$ is a coordinate chart.
\end{prop}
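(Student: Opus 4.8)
The plan is to mirror exactly the construction carried out in the previous section for the spacecraft, since the geometric situation is identical after a relabeling of variables. First I would observe that the three pieces out of which the chart is built have already been established or can be established verbatim: the map $\mathcal{F}_{{\bf u}^e}:SE(3)\rightarrow {\bf C}^+_{{\bf u}^e}$ is claimed to be a homeomorphism (the proof is the same as that of the Lemma, using that $R$ preserves inner products and orientation, so that $R{\bf Q}^e,R\boldsymbol{\Gamma}^e$ still satisfy $C_{11},C_{12},C_{22}$ and lie in the same orientation component, together with surjectivity from completing $\{{\bf Q}^e,\boldsymbol{\Gamma}^e\}$ to a positively oriented basis and injectivity from $\{{\bf Q}^e,\boldsymbol{\Gamma}^e, {\bf Q}^e\times\boldsymbol{\Gamma}^e\}$ spanning $\R^3$); the double covering $\widetilde{\bf P}:\R^3\times S^3\rightarrow SE(3)$ with $\widetilde{\bf P}(\boldsymbol{\Pi}^e,(1,{\bf 0}))=(\boldsymbol{\Pi}^e,I_3)$ is exactly the one built in the previous section; and $\mathcal{G}:\R^3\times B_3({\bf 0},1)\rightarrow \R^3\times S^3_+$, $\mathcal{G}(\boldsymbol{\Pi},{\bf p})=(\boldsymbol{\Pi},(\sqrt{1-\|{\bf p}\|^2},{\bf p}))$, is the same homeomorphism onto the upper hemisphere, with $\mathcal{G}(\boldsymbol{\Pi}^e,{\bf 0})=(\boldsymbol{\Pi}^e,(1,{\bf 0}))$.

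Next I would assemble these. The composition $\mathcal{F}_{{\bf u}^e}\circ\widetilde{\bf P}\circ\mathcal{G}$ is defined on $\R^3\times B_3({\bf 0},1)$; it maps $(\boldsymbol{\Pi}^e,{\bf 0})$ to ${\bf u}^e$. It is a continuous surjection onto $\mathcal{F}_{{\bf u}^e}(\widetilde{\bf P}(\R^3\times S^3_+))\subset {\bf C}^+_{{\bf u}^e}$, but it need not be globally injective because $\widetilde{\bf P}$ is only a double covering. However, $\widetilde{\bf P}$ restricted to $\R^3\times S^3_+$ is a homeomorphism onto its open image: since $(1,{\bf 0})$ lies in $S^3_+$, the evenly-covered neighborhood property (quoted in the footnote) gives an open set $U_{I_3}\ni I_3$ in $SO(3)$ whose preimage under ${\bf P}$ splits into two sheets, one of which, call it $W$, is an open neighborhood of $(1,{\bf 0})$ contained in $S^3_+$ and mapped homeomorphically onto $U_{I_3}$. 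Then I would set $V:=\mathcal{G}^{-1}(\R^3\times W)$, an open neighborhood of $(\boldsymbol{\Pi}^e,{\bf 0})$ in $\R^6$, and $U:=(\mathcal{F}_{{\bf u}^e}\circ\widetilde{\bf P}\circ\mathcal{G})(V)$. On $V$ the map $\mathcal{G}$ is a homeomorphism onto $\R^3\times W$, $\widetilde{\bf P}$ is a homeomorphism onto $\R^3\times U_{I_3}$, and $\mathcal{F}_{{\bf u}^e}$ is a homeomorphism onto its (open) image; hence the composition is a homeomorphism $V\rightarrow U$, its inverse $(\mathcal{F}_{{\bf u}^e}\circ\widetilde{\bf P}\circ\mathcal{G})^{-1}:U\rightarrow V$ is the desired chart, and ${\bf u}^e\in U$ because $(\boldsymbol{\Pi}^e,{\bf 0})\in V$.

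The one point that deserves care — and which I expect to be the only real obstacle rather than a routine verification — is that $U$ is genuinely open \emph{in the symplectic leaf} ${\bf C}^{-1}({\bf C}({\bf u}^e))$, equivalently that $\mathcal{F}_{{\bf u}^e}$ is an open map (not merely a continuous bijection). For this I would either invoke invariance of domain, using that $SE(3)$ and the regular leaf ${\bf C}^{-1}({\bf C}({\bf u}^e))$ are both manifolds of the same dimension $6$ (the leaf is a regular level set of ${\bf C}$, whose rank is $3$ at ${\bf u}^e$ by genericity, so locally it is a $6$-manifold), or appeal to the fact that a continuous bijection from a locally compact Hausdorff space onto a Hausdorff space that is proper is a homeomorphism — and $SE(3)=\R^3\rtimes SO(3)$ is not compact, so the invariance-of-domain route is cleaner. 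Since all of this is word-for-word the argument behind Proposition \ref{chart-spacecraft}, the honest write-up is simply to note that the triple $(\mathcal{F}_{{\bf u}^e},\widetilde{\bf P},\mathcal{G})$ has all the same properties as in the previous section — $\mathcal{F}_{{\bf u}^e}$ a homeomorphism onto ${\bf C}^+_{{\bf u}^e}$, $\widetilde{\bf P}$ a double covering restricting to a homeomorphism on the upper-hemisphere component through $(1,{\bf 0})$, $\mathcal{G}$ a homeomorphism onto the upper hemisphere — and therefore the conclusion of Proposition \ref{chart-spacecraft} transfers verbatim, with $\boldsymbol{\alpha}^e,\boldsymbol{\beta}^e,\boldsymbol{\gamma}^e$ replaced by ${\bf Q}^e,\boldsymbol{\Gamma}^e$ and ${\bf z}^e$ by ${\bf u}^e$.
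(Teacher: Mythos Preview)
Your proposal is correct and follows exactly the approach the paper intends: the paper gives no separate proof of Proposition~\ref{chart-reduced}, merely stating that it follows ``by using the notations of the previous section,'' i.e., by transferring the construction behind Proposition~\ref{chart-spacecraft} verbatim with $(\boldsymbol{\alpha}^e,\boldsymbol{\beta}^e,\boldsymbol{\gamma}^e)$ replaced by $({\bf Q}^e,\boldsymbol{\Gamma}^e)$. Your write-up is in fact more detailed than the paper's---in particular your remark that openness of $U$ in the leaf requires knowing $\mathcal{F}_{{\bf u}^e}$ is a homeomorphism (via invariance of domain on $6$-manifolds) fills a point the paper leaves implicit.
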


The Hamiltonian function written in the coordinates of the chart defined in the above proposition is:
$$H_{{\bf u}^e}(\boldsymbol{\Pi},{\bf p})=H(\boldsymbol{\Pi},R^{(\sqrt{1-||{\bf p}||^2},{\bf p})}{\bf Q}^e, R^{(\sqrt{1-||{\bf p}||^2},{\bf p})}\boldsymbol{\Gamma}^e ).$$

Analogously with the proofs of Theorem \ref{chart-spacecraft} and Theorem \ref{chart-spacecraft-Hess} we obtain the following  stability results.

\begin{thm}
 Let ${\bf u}^e$ be a generic equilibrium point satisfying conditions \eqref{condition-equilibria}.
If the point $(\boldsymbol{\Pi}^e,{\bf 0})$ is  a strict local extremum of $H_{{\bf u}^e}$, then the equilibrium point ${\bf u}^e$ is Lyapunov stable.

\end{thm}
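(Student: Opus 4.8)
The plan is to repeat, almost verbatim, the argument used for Theorem~\ref{chart-spacecraft-Hess}, with the seven conserved quantities of the spacecraft problem replaced by the four conserved quantities $H, C_{11}, C_{12}, C_{22}$ of the system \eqref{underwater-1-Poisson}, and with the algebraic stability criterion of \cite{comanescu} as the final input. First I would record the structural facts already in hand: by Proposition~\ref{chart-reduced} the map $\varphi := (\mathcal{F}_{{\bf u}^e}\circ\tilde{\bf P}\circ\mathcal{G})^{-1}$ is a homeomorphism of an open neighbourhood $U$ of ${\bf u}^e$ in the symplectic leaf ${\bf C}^{-1}({\bf C}({\bf u}^e))$ onto an open set $V\subset\R^6$, with $\varphi({\bf u}^e)=(\boldsymbol{\Pi}^e,{\bf 0})$, and $H\circ\varphi^{-1}=H_{{\bf u}^e}$ on $V$ by the displayed formula for $H_{{\bf u}^e}$. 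Thus questions about $H$ restricted to the leaf near ${\bf u}^e$ are literally questions about $H_{{\bf u}^e}$ near $(\boldsymbol{\Pi}^e,{\bf 0})$.

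Next I would translate the hypothesis into a statement about level sets of the integrals. Assuming $(\boldsymbol{\Pi}^e,{\bf 0})$ is a strict local extremum of $H_{{\bf u}^e}$ means there is a neighbourhood $V_0\subset V$ on which $H_{{\bf u}^e}(\boldsymbol{\Pi},{\bf p})\neq H_{{\bf u}^e}(\boldsymbol{\Pi}^e,{\bf 0})$ whenever $(\boldsymbol{\Pi},{\bf p})\neq(\boldsymbol{\Pi}^e,{\bf 0})$. Pulling back through $\varphi$, the set $\{{\bf u}\in U_0 : H({\bf u})=H({\bf u}^e)\}$, where $U_0=\varphi^{-1}(V_0)$, reduces to $\{{\bf u}^e\}$ inside the leaf. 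Equivalently, in the ambient space $\R^9$ the algebraic system
$$H({\bf u})=H({\bf u}^e),\quad C_{11}({\bf u})=C_{11}({\bf u}^e),\quad C_{12}({\bf u})=C_{12}({\bf u}^e),\quad C_{22}({\bf u})=C_{22}({\bf u}^e)$$
has ${\bf u}^e$ as its only solution in some neighbourhood of ${\bf u}^e$: the three Casimir equations cut out the leaf ${\bf C}^{-1}({\bf C}({\bf u}^e))$, and on the leaf the equation $H=H({\bf u}^e)$ leaves only the point by the preceding sentence. Here one should note the mild bookkeeping point that $U$ is merely open in the leaf, but since ${\bf C}^{-1}({\bf C}({\bf u}^e))$ is closed in $\R^9$ and $U$ is an open neighbourhood of ${\bf u}^e$ within it, shrinking the ambient neighbourhood of ${\bf u}^e$ forces all nearby solutions of the Casimir equations to lie in $U$, indeed in $U_0$.

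Finally I would invoke the algebraic method of \cite{comanescu}: since $H, C_{11}, C_{12}, C_{22}$ are first integrals of \eqref{underwater-1-Poisson} and the system of equations they generate has ${\bf u}^e$ as an isolated solution in a neighbourhood of ${\bf u}^e$, the equilibrium point ${\bf u}^e$ is Lyapunov stable — intuitively, any trajectory issuing from a point close to ${\bf u}^e$ remains on a common level set of these integrals, and such level sets contract to $\{{\bf u}^e\}$ as the initial point approaches ${\bf u}^e$. The only step that is not pure routine is the bookkeeping described in the parenthetical remark above, namely upgrading local isolatedness on the chart $U$ to local isolatedness of the full algebraic system in $\R^9$; this is exactly the passage handled in the proof of Theorem~\ref{chart-spacecraft-Hess} and uses nothing beyond openness of $U$ inside the closed leaf ${\bf C}^{-1}({\bf C}({\bf u}^e))$.
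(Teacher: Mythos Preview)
Your proposal is correct and follows essentially the same approach as the paper: the paper simply states that the proof is analogous to that of Theorem~\ref{chart-spacecraft-Hess}, which amounts precisely to the argument you spell out --- the strict local extremum hypothesis forces the algebraic system $H({\bf u})=H({\bf u}^e),\ C_{11}({\bf u})=C_{11}({\bf u}^e),\ C_{12}({\bf u})=C_{12}({\bf u}^e),\ C_{22}({\bf u})=C_{22}({\bf u}^e)$ to have ${\bf u}^e$ as an isolated root, and the algebraic method of \cite{comanescu} then yields Lyapunov stability. Your version is in fact more detailed than the paper's in handling the passage from the chart $U$ to an ambient neighbourhood in $\R^9$.
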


\begin{thm}\label{conditions-stability-Hessian-reduced}
 Let ${\bf u}^e$ be a generic equilibrium point satisfying conditions \eqref{condition-equilibria}.
If $(\boldsymbol{\Pi}^e,{\bf 0})$ is a stationary point for $H_{{\bf u}^e}$ and the Hessian matrix $\text{Hess}\,H_{{\bf u}^e}(\boldsymbol{\Pi}^e,{\bf 0})$ is positive or negative definite, then the equilibrium point is Lyapunov stable.
\end{thm}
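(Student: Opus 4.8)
The plan is to follow the template already used in Section 2 for the spacecraft, since the underwater vehicle setup is structurally identical: the Casimir functions cut out a regular symplectic leaf, the equilibrium ${\bf u}^e$ is generic, and Proposition \ref{chart-reduced} furnishes a genuine coordinate chart $(\mathcal{F}_{{\bf u}^e}\circ\tilde{\bf P}\circ\mathcal G)^{-1}$ around ${\bf u}^e$ in which $(\boldsymbol\Pi^e,{\bf 0})$ represents ${\bf u}^e$. The key observation is that the algebraic stability method of \cite{comanescu} does not care about the particular form of $H$ or $\Lambda$; it only needs that the restriction $H_{{\bf u}^e}$ of the Hamiltonian to the leaf, expressed in these coordinates, has a strict local extremum at the point corresponding to the equilibrium.

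First I would invoke the previous theorem (the ``strict local extremum'' version for ${\bf u}^e$), which is already stated in the excerpt and may be assumed: if $(\boldsymbol\Pi^e,{\bf 0})$ is a strict local extremum of $H_{{\bf u}^e}$, then ${\bf u}^e$ is Lyapunov stable. So the proof of the Hessian version reduces to a purely local fact from calculus: if $(\boldsymbol\Pi^e,{\bf 0})$ is a stationary point of the smooth function $H_{{\bf u}^e}:V\subset\R^6\to\R$ and $\operatorname{Hess} H_{{\bf u}^e}(\boldsymbol\Pi^e,{\bf 0})$ is positive definite (respectively negative definite), then $(\boldsymbol\Pi^e,{\bf 0})$ is a strict local minimum (respectively maximum), hence a strict local extremum. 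This is the second-order sufficient condition for a strict local extremum; one writes the second-order Taylor expansion $H_{{\bf u}^e}(\boldsymbol\Pi,{\bf p}) = H_{{\bf u}^e}(\boldsymbol\Pi^e,{\bf 0}) + \tfrac12\langle {\bf w},\operatorname{Hess}H_{{\bf u}^e}(\boldsymbol\Pi^e,{\bf 0}){\bf w}\rangle + o(\|{\bf w}\|^2)$ with ${\bf w}=(\boldsymbol\Pi-\boldsymbol\Pi^e,{\bf p})$, and definiteness of the Hessian forces a strict inequality in a punctured neighborhood.

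I would also need to check smoothness: $H_{{\bf u}^e}$ is the composition of the smooth Hamiltonian $H$ with the map $(\boldsymbol\Pi,{\bf p})\mapsto(\boldsymbol\Pi, R^{(\sqrt{1-\|{\bf p}\|^2},{\bf p})}{\bf Q}^e, R^{(\sqrt{1-\|{\bf p}\|^2},{\bf p})}\boldsymbol\Gamma^e)$, which is smooth on the open ball $\|{\bf p}\|<1$ because $q_0=\sqrt{1-\|{\bf p}\|^2}$ is smooth there and $R^{\bf q}$ is polynomial in ${\bf q}$; hence $\operatorname{Hess}H_{{\bf u}^e}(\boldsymbol\Pi^e,{\bf 0})$ makes sense and the Taylor argument applies. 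Then I conclude by the previous theorem. The only real subtlety — and it is already handled by citing the chart proposition and the preceding extremum theorem — is that a strict local extremum of the \emph{restricted} Hamiltonian on the leaf (not of $H$ on the full space $\R^9$) is what yields stability, via the conserved-quantities/algebraic argument of \cite{comanescu}: the simultaneous level set $\{H=H({\bf u}^e),\,C_{ij}=C_{ij}({\bf u}^e)\}$ reduces, on the leaf, to $\{H_{{\bf u}^e}=H_{{\bf u}^e}(\boldsymbol\Pi^e,{\bf 0})\}$, which is locally the single point $(\boldsymbol\Pi^e,{\bf 0})$ by the strict extremum property, so the equilibrium is isolated among common level sets of the first integrals and therefore Lyapunov stable. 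I expect no genuine obstacle here; the statement is an immediate corollary, and the proof is essentially one sentence: \emph{definiteness of the Hessian at a critical point gives a strict local extremum, so the hypothesis of the previous theorem is met.}

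\begin{proof}
Since $(\boldsymbol{\Pi}^e,{\bf 0})$ is a stationary point of the smooth function $H_{{\bf u}^e}$ (smooth on $\R^3\times B_3({\bf 0},1)$, as $q_0=\sqrt{1-\|{\bf p}\|^2}$ is smooth for $\|{\bf p}\|<1$ and $R^{\bf q}$ is polynomial in ${\bf q}$), the second-order Taylor expansion gives, with ${\bf w}=(\boldsymbol{\Pi}-\boldsymbol{\Pi}^e,{\bf p})$,
$$H_{{\bf u}^e}(\boldsymbol{\Pi},{\bf p})=H_{{\bf u}^e}(\boldsymbol{\Pi}^e,{\bf 0})+\frac{1}{2}<{\bf w},\text{Hess}\,H_{{\bf u}^e}(\boldsymbol{\Pi}^e,{\bf 0}){\bf w}>+o(\|{\bf w}\|^2).$$
If $\text{Hess}\,H_{{\bf u}^e}(\boldsymbol{\Pi}^e,{\bf 0})$ is positive definite, then there is $c>0$ with $<{\bf w},\text{Hess}\,H_{{\bf u}^e}(\boldsymbol{\Pi}^e,{\bf 0}){\bf w}>\,\geq c\|{\bf w}\|^2$, so $H_{{\bf u}^e}(\boldsymbol{\Pi},{\bf p})>H_{{\bf u}^e}(\boldsymbol{\Pi}^e,{\bf 0})$ for all $(\boldsymbol{\Pi},{\bf p})\neq(\boldsymbol{\Pi}^e,{\bf 0})$ in a sufficiently small neighborhood; that is, $(\boldsymbol{\Pi}^e,{\bf 0})$ is a strict local minimum of $H_{{\bf u}^e}$. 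Analogously, if the Hessian is negative definite, $(\boldsymbol{\Pi}^e,{\bf 0})$ is a strict local maximum. In either case $(\boldsymbol{\Pi}^e,{\bf 0})$ is a strict local extremum of $H_{{\bf u}^e}$, and the conclusion follows from the preceding theorem.
\end{proof}
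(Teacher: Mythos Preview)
Your proof is correct and follows exactly the approach the paper intends: in the paper, Theorem \ref{conditions-stability-Hessian-reduced} is stated as obtained ``analogously'' to the spacecraft case, where the Hessian version (Theorem \ref{conditions-stability-Hessian}) is declared an ``immediate consequence'' of the strict-extremum theorem (Theorem \ref{chart-spacecraft-Hess}). You simply make explicit the standard second-order sufficient condition that the paper leaves implicit, so there is nothing to add.
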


\subsection{Stability for an ellipsoidal bottom-heavy underwater vehicle}

In this section we suppose that the vehicle can be approximated by an ellipsoid. The origin of the body-fixed frame is located in the center of the buoyancy and we set the axes to be the principal axes of the displaced fluid. In this case the matrix $\Theta^f$ (see \eqref{Theta-f}) is a diagonal matrix. Suppose that the vector ${\bf r}$ is along the third axis; more precisely ${\bf r}=(0,0,1)$. In the papers \cite{leonard-automatica}, \cite{leonard-1996}, \cite{leonard-1996-1}, \cite{leonard-marsden} is supposed that the principal axes of displaced fluid coincide with the principal axes of the vehicle. In this paper we consider a more general situation, we suppose that the third axis is a principal axis of inertia for the vehicle but the first and second axes of the vehicle-fixed frame may not be principal axes of inertia for the vehicle. Using our hypotheses and the relation \eqref{matrices-underwater} we deduce:
\begin{equation}
M=\left(%
\begin{array}{ccc}
m_1 & 0 & 0 \\
0 & m_2 & 0\\
0 & 0 & m_{3}
\end{array}%
\right),
J=\left(%
\begin{array}{ccc}
I_{11} & I_{12} & 0 \\
I_{12} & I_{22} & 0\\
0 & 0 & I_{3}
\end{array}%
\right)
,
D=ml\widehat{\bf r}=\left(%
\begin{array}{ccc}
0 & -ml & 0 \\
ml & 0 & 0\\
0 & 0 & 0
\end{array}%
\right).
\end{equation}
\begin{figure}[h!]
  \caption{Underwater with noncoincident center of gravity and buoyancy.}
  \centering
    \includegraphics[width=0.8\textwidth]{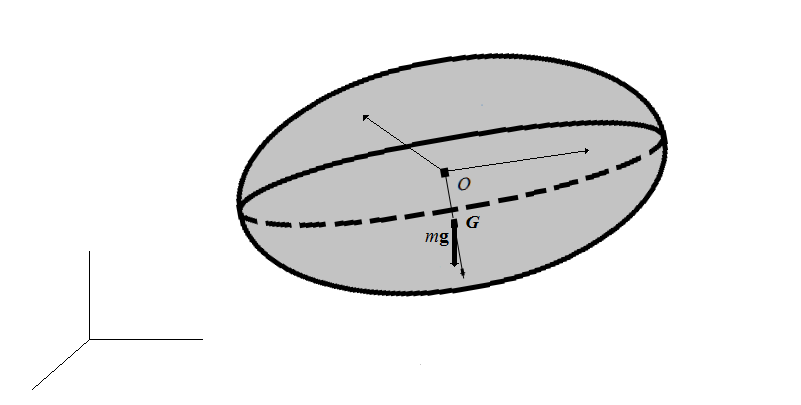}
\end{figure}
By physical reasons we have the inequalities:
\begin{equation}\label{inequality-physics-1}
m_1>0,\,\,m_2>0,\,\,m_3>0,\,\,I_{11}>0,\,\,I_{22}>0,\,\,I_{11}I_{22}-I_{12}^2>0,\,\,\,I_3>0.
\end{equation}
In this case we have
$$A=\frac{1}{k}\left(%
\begin{array}{ccc}
m_2(m_1I_{22}-m^2l^2) & -m_1m_2I_{12} & 0 \\
-m_1m_2I_{12} & m_1(m_2I_{11}-m^2l^2) & 0\\
0 & 0 & \frac{k}{I_3}
\end{array}%
\right),$$
$$B=\frac{1}{k}\left(%
\begin{array}{ccc}
m_1m_2I_{12} & -(m_2I_{11}-m^2l^2)ml & 0  \\
(m_1I_{22}-m^2l^2)ml & -m_1m_2I_{12} & 0 \\
 0 & 0 & 0
\end{array}%
\right).
$$
$$C=\frac{1}{k}\left(%
\begin{array}{ccc}
m_2(I_{11}I_{22}-I_{12}^2)-m^2l^2I_{22} & m^2l^2I_{12} & 0 \\
m^2l^2I_{12} & m_1(I_{11}I_{22}-I_{12}^2)-m^2l^2I_{11} & 0\\
0 & 0 & \frac{k}{m_3}
\end{array}%
\right),$$
$$k=m_1m_2(I_{11}I_{22}-I_{12}^2)-m^2l^2(m_1I_{22}+m_2I_{11})+m^4l^4.$$
By calculus we observe that the matrix $
\left(%
\begin{array}{cc}
A & B^T \\
B & C
\end{array}%
\right)$ is positive definite if and only if the following inequalities are satisfied:
\begin{equation}\label{inequality-physics-1}
m_1I_{22}>m^2l^2,\,\,m_2I_{11}>m^2l^2,\,\,k>0.
\end{equation}

We study a generic equilibrium point ${\bf u}^e$ with the coordinates:
\begin{equation}\label{equilibrium-underwater-particular-10}
\Pi_1^e=-\frac{ml}{m_2}Q_2^e,\,\,\Pi_2^e=\Pi_3^e=0,
Q_1^e=0,\,\,Q_2^e\in\R^*,\,\,Q_3^e=0,
\Gamma_1^e=\Gamma_2^e=0,\,\,\Gamma_3^e=1.
\end{equation}
This equilibrium point correspond to a constant translation with no spin along a horizontal direction of a bottom-heavy underwater vehicle.
By using the coordinate chart defined above, we have:
\small{$${\bf Q}=((2p_1p_2-2p_3\sqrt{1-(p_1^2+p_2^2+p_3^2)})Q_2^e, (1-2p_1^2-2p_3^2)Q_2^e,(2p_2p_3+2p_1\sqrt{1-(p_1^2+p_2^2+p_3^2)})Q_2^e)$$}
$$\boldsymbol{\Gamma}=(2p_1p_3+2p_2\sqrt{1-(p_1^2+p_2^2+p_3^2)}, 2p_2p_3-2p_1\sqrt{1-(p_1^2+p_2^2+p_3^2)}, 1-2p_1^2-2p_2^2).$$

We can present the main result of this section.
\begin{thm}
If we have
\begin{equation}
Q_2^e\neq 0,\,\,l>0,\,\,mgl>\left(\frac{1}{m_2}-\frac{1}{m_3}\right)(Q_2^e)^2,\,\,m_2>m_1,
\end{equation}
then the equilibrium point \eqref{equilibrium-underwater-particular-10} is Lyapunov stable.
\end{thm}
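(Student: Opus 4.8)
\noindent\emph{Proof plan.} The plan is to verify the hypotheses of Theorem~\ref{conditions-stability-Hessian-reduced}: I will show that $(\boldsymbol{\Pi}^e,{\bf 0})$ is a critical point of $H_{{\bf u}^e}$ and that $\text{Hess}\,H_{{\bf u}^e}(\boldsymbol{\Pi}^e,{\bf 0})$ is positive definite. First I would substitute ${\bf Q}({\bf p})=R^{(\sqrt{1-\|{\bf p}\|^2},{\bf p})}{\bf Q}^e$ and $\boldsymbol{\Gamma}({\bf p})=R^{(\sqrt{1-\|{\bf p}\|^2},{\bf p})}\boldsymbol{\Gamma}^e$ into $H$, using the exact identity $R^{(\sqrt{1-\|{\bf p}\|^2},{\bf p})}=I_3+2\sqrt{1-\|{\bf p}\|^2}\,\widehat{\bf p}+2\widehat{\bf p}^{2}$, which gives the second order expansion $R^{(\sqrt{1-\|{\bf p}\|^2},{\bf p})}=I_3+2\widehat{\bf p}+2\widehat{\bf p}^{2}+O(\|{\bf p}\|^{3})$; in particular $\partial_{\bf p}{\bf Q}|_{{\bf 0}}=-2\widehat{{\bf Q}^e}$ and $\partial_{\bf p}\boldsymbol{\Gamma}|_{{\bf 0}}=-2\widehat{\boldsymbol{\Gamma}^e}$. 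Writing $\boldsymbol{\Pi}=\boldsymbol{\Pi}^e+\delta\boldsymbol{\Pi}$ and collecting terms by degree, the part linear in $\delta\boldsymbol{\Pi}$ equals $\langle\delta\boldsymbol{\Pi},A\boldsymbol{\Pi}^e+B^{T}{\bf Q}^e\rangle$, which vanishes because ${\bf u}^e$ has no spin ($\boldsymbol{\Omega}^e={\bf 0}$), while the part linear in ${\bf p}$ equals $2\langle{\bf p},\,{\bf Q}^e\times{\bf v}^e-mgl\,\boldsymbol{\Gamma}^e\times{\bf r}\rangle$, which vanishes because ${\bf u}^e$ is an equilibrium. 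Hence $(\boldsymbol{\Pi}^e,{\bf 0})$ is a critical point of $H_{{\bf u}^e}$.

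From the quadratic part of the expansion I would then read off the block structure of the Hessian: the $\boldsymbol{\Pi}$--$\boldsymbol{\Pi}$ block is $A$, the $\boldsymbol{\Pi}$--${\bf p}$ block is $-2B^{T}\widehat{{\bf Q}^e}$, and the ${\bf p}$--${\bf p}$ block is twice the symmetric matrix of the quadratic form
\[
Q_{pp}({\bf p})=2\langle{\bf v}^e,\widehat{\bf p}^{2}{\bf Q}^e\rangle+2\langle\widehat{\bf p}{\bf Q}^e,C\widehat{\bf p}{\bf Q}^e\rangle-2mgl\,\langle\boldsymbol{\Gamma}^e,\widehat{\bf p}^{2}{\bf r}\rangle,
\]
where the contributions $2\langle B\boldsymbol{\Pi}^e,\widehat{\bf p}^{2}{\bf Q}^e\rangle$ and $2\langle C{\bf Q}^e,\widehat{\bf p}^{2}{\bf Q}^e\rangle$ have been grouped into $2\langle{\bf v}^e,\widehat{\bf p}^{2}{\bf Q}^e\rangle$ using ${\bf v}^e=B\boldsymbol{\Pi}^e+C{\bf Q}^e$. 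Since $A$ is positive definite (already established), the Hessian is positive definite if and only if its Schur complement relative to the $A$-block is, and, using $\widehat{\bf p}{\bf Q}^e=-\widehat{{\bf Q}^e}{\bf p}$, this Schur complement is proportional to $Q_{pp}({\bf p})-2\langle\widehat{\bf p}{\bf Q}^e,BA^{-1}B^{T}\widehat{\bf p}{\bf Q}^e\rangle$.

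The decisive step — the one responsible for the stability conditions being independent of $I_{11},I_{22},I_{12}$ — is to simplify $BA^{-1}B^{T}$. From the block inversion of the matrix relating $(\boldsymbol{\Pi},{\bf Q})$ and $(\boldsymbol{\Omega},{\bf v})$ one has $B=-M^{-1}D^{T}A$ and $C=M^{-1}+M^{-1}D^{T}ADM^{-1}$, hence $BA^{-1}B^{T}=C-M^{-1}$; substituting this cancels the $C$-term and leaves the Schur complement proportional to
\[
S({\bf p})=2\langle{\bf v}^e,\widehat{\bf p}^{2}{\bf Q}^e\rangle+2\langle\widehat{\bf p}{\bf Q}^e,M^{-1}\widehat{\bf p}{\bf Q}^e\rangle-2mgl\,\langle\boldsymbol{\Gamma}^e,\widehat{\bf p}^{2}{\bf r}\rangle.
\]
The same relations, together with $\boldsymbol{\Omega}^e={\bf 0}$ (so $\boldsymbol{\Pi}^e=-A^{-1}B^{T}{\bf Q}^e$), give ${\bf v}^e=(C-BA^{-1}B^{T}){\bf Q}^e=M^{-1}{\bf Q}^e=(0,Q_2^e/m_2,0)$; in particular ${\bf v}^e$ is parallel to ${\bf Q}^e$, which is exactly the equilibrium condition used above. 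Substituting ${\bf Q}^e=(0,Q_2^e,0)$, $\boldsymbol{\Gamma}^e={\bf r}=(0,0,1)$, ${\bf v}^e=(0,Q_2^e/m_2,0)$, $M^{-1}=\text{diag}(1/m_1,1/m_2,1/m_3)$ and using $\widehat{\bf p}^{2}{\bf w}=\langle{\bf p},{\bf w}\rangle{\bf p}-\|{\bf p}\|^{2}{\bf w}$ and $\widehat{\bf p}{\bf Q}^e=Q_2^e(-p_3,0,p_1)$, the off-diagonal terms cancel and
\[
S({\bf p})=2\Big(mgl-\big(\tfrac{1}{m_2}-\tfrac{1}{m_3}\big)(Q_2^e)^{2}\Big)p_1^{2}+2mgl\,p_2^{2}+2(Q_2^e)^{2}\big(\tfrac{1}{m_1}-\tfrac{1}{m_2}\big)p_3^{2}.
\]
Under the hypotheses $Q_2^e\neq0$, $l>0$ (hence $mgl>0$), $mgl>(\tfrac{1}{m_2}-\tfrac{1}{m_3})(Q_2^e)^{2}$ and $m_2>m_1$, all three coefficients are strictly positive, so $S$ — and therefore $\text{Hess}\,H_{{\bf u}^e}(\boldsymbol{\Pi}^e,{\bf 0})$ — is positive definite, and Lyapunov stability follows from Theorem~\ref{conditions-stability-Hessian-reduced}. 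I expect the only real obstacle to be the bookkeeping that reduces $S$ to this diagonal form; the two identities $BA^{-1}B^{T}=C-M^{-1}$ and ${\bf v}^e=M^{-1}{\bf Q}^e$ are what make the inertia parameters disappear, after which the remaining computation is routine.
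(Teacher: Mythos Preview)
Your argument is correct, and it reaches the same conclusion as the paper by a genuinely different route. The paper proceeds by brute force: it writes out all fifteen distinct entries $h_{ij}$ of the $6\times 6$ Hessian (many of them still containing $I_{11},I_{22},I_{12},k$), then records that the determinant collapses to $\dfrac{64mgl}{kI_3}(m_2-m_1)\bigl(mgl+(\tfrac{1}{m_3}-\tfrac{1}{m_2})(Q_2^e)^2\bigr)(Q_2^e)^2$ and asserts positive definiteness from the hypotheses together with the inequalities \eqref{inequality-physics-1}; the cancellation of the inertia parameters is an unexplained outcome of the computation. By contrast, your Schur-complement reduction isolates exactly why this happens: the block-inverse identities $B=-M^{-1}D^{T}A$, $C=M^{-1}+M^{-1}D^{T}ADM^{-1}$ give $BA^{-1}B^{T}=C-M^{-1}$ and ${\bf v}^e=M^{-1}{\bf Q}^e$, so the $C$-contribution and the $\boldsymbol{\Pi}$--${\bf p}$ coupling cancel and the reduced quadratic form $S({\bf p})$ depends only on $M$ and $mgl$. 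Your approach is shorter, coordinate-light, and it explains the structural remark made after the theorem (independence from the orientation of the first two principal axes), whereas the paper's direct computation merely exhibits it. The only thing you lose is the explicit Hessian entries themselves, which the paper lists; if one wanted to analyse borderline (non-strict) cases or spectral information, those would still have to be computed.
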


\begin{proof}
By direct calculus we observe that the point $(-\frac{ml}{m_2}Q_2^e,0,0,0,0,0)$  is a critical point for the function $H_{{\bf u}^e}$. The Hessian matrix calculated in the equilibrium point has the components:
\begin{align*}
h_{11} & =\frac{m_2}{k}(m_1I_{22}-m^2l^2),\,h_{12}=-\frac{m_1m_2I_{12}}{k},\,h_{13}=h_{14}=h_{15}=0,\,h_{16}=-\frac{2m_2mlI_{12}}{k}Q_2^e, \\
h_{22} & = \frac{m_1}{k}(m_2I_{11}-m^2l^2),\,h_{23}=h_{24}=h_{25}=0,\,h_{26}=\frac{2ml}{k}(m_2I_{11}-m^2l^2)Q_2^e, \\
h_{33} & =  \frac{1}{I_3}, h_{34}=h_{35}=h_{36}=0, \\
h_{44} & =4(mgl+(\frac{1}{m_3}-\frac{1}{m_2})(Q_2^e)^2),\,h_{35}=h_{36}=0, \\
h_{55} & =4mgl,\,h_{56}=0,\\
h_{66} & = -\frac{4(Q_2^e)^2}{km_2}(k-m_2^2(J_{11}J_{22}-J_{12}^2)+m_2m^2l^2J_{22}).
\end{align*}
The determinant of the Hessian matrix is
$$\frac{64mgl}{kI_{3}}(m_2-m_1)\left(mgl+(\frac{1}{m_3}-\frac{1}{m_2})(Q_2^e)^2\right)(Q_2^e)^2.$$
By using the hypotheses of this Theorem and the inequalities \eqref{inequality-physics-1} we observe that the Hessian matrix is positive definite.  Applying the Theorem \ref{conditions-stability-Hessian-reduced} we obtain the announced result.
\end{proof}

We remark that the conditions for the stability of an equilibrium point of type \eqref{equilibrium-underwater-particular-10} do not depend by the position of the first and second axes of inertia of the vehicle in the perpendicular plane on the third axis of the vehicle-fixed frame. The conditions for the stability of the above theorem have been obtained in Theorem 2 of the paper \cite{leonard-automatica} for the particular case when the principal axes of inertia of the displaced fluid and the principal axes of inertia of the underwater vehicle are coincident. In the paper \cite{leonard-1996-1} is used the energy-Casimir method to prove the stability of an equilibrium point.


\begin{thebibliography}{99}

\bibitem{arnold} {\bf Arnold V.}, {\it Conditions for nonlinear stability of stationary plane curvilinear flows of an
ideal fluid}, Doklady, tome 162, no 5 (1965), 773-777.
\bibitem{birtea-casu}{\bf Birtea P., Ca\c{s}u I.}, The stability problem and special solutions for the 5-components Maxwell–Bloch equations, Applied Mathematics Letters, Volume 26, Issue 8 (2013), pp. 875-880.
\bibitem{birtea-comanescu} {\bf Birtea P., Com\u{a}nescu D.}, A note on stability of nongeneric equilibria for an underwater vehicle, (2014), http://arxiv.org/pdf/1411.4388.pdf
\bibitem{birtea-puta}{\bf Birtea P., Puta M.}, {\it Equivalence of energy methods in
stability theory}, J. Math. Phys., Volume 48, Issue 4 (2007), pp. 81-99.
\bibitem{comanescu} {\bf Com\u{a}nescu D.}, The stability problem for the torque-free gyrostat investigated by using algebraic methods, Applied Mathematics Letters, Volume 25, Issue 9 (2012), pp. 1185-1190.
\bibitem{comanescu-1}{\bf Com\u{a}nescu D.}, Stability of equilibrium states in the Zhukovski case of heavy
gyrostat using algebraic methods, Mathematical Methods in the Applied Sciences, Volume 36, Issue 4 (2013), pp. 373-382.
\bibitem{comanescu-2}{\bf Com\u{a}nescu D.}, A note on stability of the vertical uniform rotations of the heavy top, ZAMM, Volume 93, Issue 9 (2013), pp. 697-699.
\bibitem{leonard-automatica}{\bf Leonard N.E.}, Stability of a Bottom-heavy Underwater Vehicle, Automatica, Vol. 33, No. 3 (1997), pp. 331-346.
\bibitem{leonard-1996}{\bf Leonard N.E.}, Stabilization of steady motions of an underwater vehicle, Proc. 35th IEEE Conf. on Decision and Control, Kobe, Japan, Vol. 1 (1996), pp. 961-966.
\bibitem{leonard-1996-1}{\bf  Leonard N.E.}, Geometric Methods for Robust Stabilization of Autonomous Underwater Vehicles, Poc. of the 1996 Symposium on Autonomous Underwater Vehicle Technology, IEEE Oceanic Engineering Society, Monterey, CA, (1996), pp. 470-476.
\bibitem{leonard-marsden}{\bf Leonard N.E., Marsden J.E.}, Stability and Drift of Underwater Vehicle Dynamics: Mechanical Systems with Rigid Motion Symmetry , Physica D: Nonlinear Phenomena, Vol. 105, Issues 1-3 (1997), pp. 130-162.
\bibitem{wang-xu}{\bf Wang Y., Xu S.}, Equilibrium attitude and nonlinear attitude stability of a spacecraft on a stationary orbit around an asteroid, Advances in Space Research, Vol. 52, Issue 8 (2013), pp. 1497-1510.
\bibitem{scheers}{\bf Scheeres D.J., Ostro S.J., Hudson R.S., Werner R.A.}, Orbits Close to Asteroid 4769 Castalia, Icarus, Vol. 121, Issue 1 (1996), pp. 67-87.





\end{thebibliography}
\end{document}